\documentclass[twoside,a4paper,11pt]{amsart}


\usepackage{amsmath,amssymb,dsfont}
\usepackage[english]{babel}
\usepackage{mathrsfs}
\usepackage{hyperref}


\textheight=548pt
\textwidth=357pt



\newcommand{\beq}{\begin{equation}}
\newcommand{\eeq}{\end{equation}}
\newcommand{\tr}{\operatorname{Tr}\,}
\renewcommand{\Im}{\operatorname{Im}\,}
\renewcommand{\Re}{\operatorname{Re}\,}

\newcommand{\sgn}{\operatorname{sgn}}

\newcommand{\RE}{\mathbb R}
\newcommand{\CO}{\mathbb C}
\newcommand{\NO}{\mathbb N}

\newcommand{\ZA}{\mathbb Z}

\newcommand{\WW}{W}

\newcommand{\ga}{\gamma}

\newcommand{\al}{\alpha}
\newcommand{\la}{\lambda}

\renewcommand{\H}{{\mathcal{H}}}
\newcommand{\ka}{{\kappa}}

\newcommand{\Sp}{\rm Sp}

\newcommand{\e}{{\rm e}}
\newcommand{\p}{{\partial}}

\newcommand{\Z}{{\mathds{Z}}}
\newcommand{\R}{{\mathds{R}}}

\DeclareMathOperator*{\Rz}{Res_0}
\DeclareMathOperator*{\Ru}{Res_1}
\DeclareMathOperator*{\Rd}{Res_2}
\DeclareMathOperator*{\Rk}{Res_k}

\newtheorem{theorem}{Theorem}[section]
\newtheorem{corollary}[theorem]{Corollary}
\newtheorem{lemma}[theorem]{Lemma}
\newtheorem{proposition}[theorem]{Proposition}


\theoremstyle{definition}

\newtheorem{remark}[theorem]{Remark}

\numberwithin{equation}{section}


\renewcommand{\gg}{g}

\title[Relative partition function of Coulomb plus delta interaction]{Relative partition function of Coulomb plus delta interaction}

\author[S. Albeverio]{Sergio Albeverio}
\address{Institut f\"{u}r Angewandte Mathematik, Universit\"{a}t Bonn,  Endenicher Allee 60,  53115 Bonn, Germany;
SFB611, Bonn; HCM, Bonn; BIBOS, Bielefeld
and Bonn; IZKS, Bonn; CERFIM, Locarno;
Acc. Arch., USI, Mendrisio; Dip. Matematica Universit\`a di Trento, Italy}
\email{albeverio@uni-bonn.de}

\author[C. Cacciapuoti]{Claudio Cacciapuoti}
\address{Dipartimento di Scienza e Alta Tecnologia, Universit\`a dell'Insubria, Via Valleggio 11, 22100 Como, Italy}
\email{claudio.cacciapuoti@uninsubria.it}

\author[M. Spreafico]{Mauro Spreafico}
\address{Dipartimento di Matematica e Fisica Ennio De Giorgi,  Universit\`a del Salento \& INFN, Via Arnesano, 73100 Lecce,
Italy}
\email{mauro.spreafico@unisalento.it}

\begin{document}

\begin{abstract}
The relative partition function and the relative zeta function of the perturbation of the Laplace operator by a Coulomb potential  plus a point interaction centered  in the origin  is discussed. Applications to the study of the Casimir effect are indicated. 
\end{abstract}

\subjclass[2010]{Primary 58J50, 11M36, 81Q80; Secondary 81T55.}

\keywords{Relative zeta function, relative partition function, relative spectral measures, Coulomb interaction, point interactions, zeta regularization, finite temperature quantum fields, Casimir effect, asymptotic expansions.}

\maketitle

{\it Dedicated to Pavel Exner}

\section{Introduction}
\label{s0}

The present paper discusses a problem related to three main areas of investigations, in mathematics and physics: the theory of quantum fields (in particular thermal fields), the study of determinants of elliptic (pseudo differential) operators, and the study of singular perturbations of  linear operators. The problem providing the link between these areas originated with a theoretical investigation by H. B. G.  Casimir \cite{casimir} who predicted the possibility of an effect, called ``Casimir effect'', of attraction of parallel conducting plates in vacuum due to the presence of fluctuations in the vacuum energy of the electromagnetic quantum field.

Since the experimental confirmation of this effect by Spaarnay \cite{spaarnay}, about ten years after the work of Casimir, both theoretical and experimental studies of ``Casimir like effects'' have received a lot of attention. In particular the temperature corrections  where first discussed by  M. Fierz \cite{Fie} and J. Mehra \cite{Meh}, we refer to the monograph  \cite{BKMM} for more references and details on the effects of temperature.  On the other hand,  its dependence on the geometry of the plates and the medium (even attractiveness can become repulsion according to changing geometry) has been discussed in several publications, see, e.g., the books \cite{BKMM,BEORZ,Elizalde1,Elizalde,Milton,MT}, the survey papers \cite{Bordagetal,PMG}, and, e.g., \cite{Boi,BPN,BV,Bressietal,Cas,Cognolaetal,CFNT,DoKe,Do,Duk,EVZ,Lamoureaux,MuC,orSpr,Pow,Scandurra}.

The physical discussion of the Casimir effect is also related to the one of the Van der Waals forces between molecules, see \cite{MT}. It has also many relations to condensed matter physics, hadronic physics, cosmology, and nanotechnology, see, e.g.,  the references in \cite{Bordagetal,BKMM,Elizalde1,Elizalde,Milton,MT,PMG}.

Theoretically the Casimir effect arises when computing the difference between two infinite quantities, namely the vacuum  energy of a quantum field with or without a certain ``boundary condition''. More generally it is a phenomenon related to the difference of two Green's functions associated with hyperbolic or elliptic operators. Such problems are also of interest in geometric analysis, particularly since the work by W. M\"uller \cite{Muller}, and M. Spreafico and S. Zerbini \cite{SZ}. The latter works are related to the introduction by Ray and Singer \cite{RS} of a definition of determinants for elliptic operators on manifolds via a zeta-function renormalization (see also, e.g., \cite{Lesch,NLS04}). By this procedure one can define $\log(\det A )^{-\frac12} $, for $A$ self adjoint, positive, in some Hilbert space, via the analytic continuation at $s= \frac12$ of the zeta-function associated with $A$, defined for $\Re s$ sufficiently large as
\[
\zeta(s;A) := \sum_{\la\in\sigma^+(A)} \la^{-s}
\]
$\sigma^+(A)$ being the positive part of the spectrum of $A$. Setting
\[Z:=(\det A)^{-\frac12},\]
 one has the  definition of the ``partition function''
 \[
 \text{``} \; Z =\int_{\Phi} e^{- S(\varphi)} d\varphi \;  \text{''},
 \]
$S(\varphi):= (\varphi, A\varphi)$, associated with a (Euclidean) quantum field with covariance operator given by the inverse of $A$ ($\varphi$ is the field, $\Phi$ the space of ``fields configurations'').

In turn, it is well known that partitions functions $Z$ arise as normalizations in heuristic Euclidean path integrals
 \[
 \text{``} \; Z^{-1} \int_{\Phi} e^{- S(\varphi)} f( \varphi) d\varphi \; \text{''},
 \]
$f$ being complex valued functions (related to ``observables''), see, e.g., \cite{0,Sym}.

On the other hand it was pointed out by Hawking \cite{Hawking} and, independently, Figari, H{\o}egh-Krohn, and Nappi \cite{FHKN}, that   there is  a strict relation between Euclidean vacuum states in de Sitter spaces of  fixed curvature and temperature states of Euclidean states. Hawking used the Ray-Singer definition of a partition function related to $A$ to compute physical quantities of the Euclidean model. For wide-ranging extensions of these connections see, e.g., \cite{AsIM,AtPaSi,FeP1,FeP2,Ful,Gibbons,Gilkey,Mo,Muller,Spreafico1,S06}.

Another application of the zeta function is in the computation of the high temperature asymptotics of several thermodynamic functions such as  the Helmholtz free energy, internal energy, and entropy, see, e.g.,  \cite{BNP02} and references therein.

As pointed out in \cite{Muller} and \cite{Spreafico2,S06,66a}, considering the relative zeta-function of a pair of elliptic operators $A$, $A_0$, leads to define, via a relative zeta-function, a relative determinant including $A$ and $A_0$ and a Casimir effect can be discussed relatively to the pair $(A, A_0)$. In fact, the strength of the Casimir effect is expressed by the derivative of the relative zeta-function at $0$. These considerations are also related to the study of relative traces of  semigroups resp. resolvents associated with pairs of operators. The study of such relative traces has its origins in quantum statistical mechanics \cite{Beth-Uhl}.

The case where $A_0$ is the Laplace-Beltrami operator on $S^1\times\RE^3$, and $A$ is a point perturbation of $A_0$ has been discussed in details in \cite{SZ} and \cite{ACSZ}. For the extended study of point interactions on $\RE^d$, $d=1,2,3$, see \cite{0, AGH-KH05,AK}. The case where $\RE^d$ is replaced by a Riemannian manifold  occurs particularly in \cite{ColindeVerdiere} (who points out its possible relevance in number theory), see also \cite{CQ,FeP2,Kurokawa}.

For further particular studies of point interactions in relation with the Casimir effect see \cite{ABD,AGH-KH05,BPN,BV,GKV,GJKQSW,Milton,Khusnutdinov,Park,ST,Solodukhin}.

Particularly close to our work is the result in \cite{ACSZ} where $A_0$ is the half space $x^3>0$ in $\RE^3$ and $A$ is taken to be the sum of two point interactions located at $(a_1,a_2,a_3)$ and $(a_1,a_2,-a_3)$, $a_1,a_2\in\RE$, $a_3\in\RE_+$. The relative trace of the resolvents was computed at values of the spectral parameter $\la$ such that  $\Im \sqrt\la >0$, and the spectral measure was constructed. Moreover the asymptotics for small and large values of the spectral parameter was found. Furthermore the relative zeta-function and its derivative at $0$ has been computed and related to the Casimir effect \cite{ACSZ}.

The present paper extends this kind of relations to the case of the pair $(A,A_0)$, where $A_0$ is the operator $-\Delta$ with a Coulomb interaction at the origin acting in $L^2(\RE^3)$, and $A$ is a perturbation of $A_0$ obtained by adding a point interaction at the origin. The construction of $A_0$ and $A$ is based on \cite{AGH-KH05}, Ch I.2. In order to define and study the relative partition function we use explicit formulae for the integrals of the Whittaker's functions which enter the explicit expression of the resolvent of $-\Delta$ with a Coulomb interaction.

Such explicit  formulae do not exist  in the situation where the point interaction is not centered at the origin.  In this situation an alternative approach would be to use series expansions to compute the integrals. It turns out that  this idea does not seem feasible due to the slow decay of the Coulomb interaction at infinity. On the other hand, the case of potentials with faster decay at infinity should be treatable in this way, replacing the explicit formulae  by methods of regular perturbations theory.

The structure of the paper is as follows. In Section \ref{s2} we recall the general definition of the relative partition function associated to a pair of non-negative self-adjoint operators and its relation with the relative zeta function. In Section \ref{s1} we study the perturbation of the Laplacian by a Coulomb and a delta potential centered at the origin. In Section \ref{s3} we study the associated relative partition function of the Coulomb plus delta interaction.

%
%

\section{Relative partition function associated to a pair of non-negative self adjoint operators}
\label{s2}


This section presents a generalization of  the method introduced in \cite{SZ} to study the analytic properties of the relative zeta function associated to a pair of operators $(A,A_0)$ as described below (see also \cite{Muller}). We assume here that logarithmic terms appear in the expansion of the relative trace, and this  will produce a double pole in  the relative zeta function, and in turn  a simple pole in the  relative partition function.


\subsection{Relative  zeta function}

We denote by $R(\la;A) \equiv (\la-A)^{-1}$ the resolvent of a linear operator $A$. $\la$ is in the resolvent set, $\rho(A)$, of $A$, a subset of $\CO$. The relative zeta function $\zeta(s;A,A_0)$ for a pair of non negative self adjoint operators $(A,A_0)$ is defined when the relative resolvent $R(\lambda;A)-R(\lambda; A_0)$ is of trace class and  some conditions on the asymptotic expansions of the trace of the relative resolvent $r(\lambda; A,A_0)$ are satisfied, as in Section 2 of \cite{SZ}. These conditions imply that similar conditions on the trace of the relative heat operator $ \tr \left(\e^{-t A}-\e^{-t A_0}\right)$ are satisfied, according to Section 2 of \cite{Muller}. The conditions in \cite{SZ} on the asymptotic expansions ensure that the relative zeta function is regular at $s=0$. In the present work  we consider a wider class of pairs, and we admit a more general type of asymptotic expansions, as follows. Let $\H$ be a separable Hilbert space, and let $A$ and $A_0$ be two self adjoint non negative linear operators in $\H$. Suppose that $\Sp A=\Sp_c A$, is purely continuous, and  assume both $0$ and $\infty$ are accumulation points of $\Sp A$. 

Then, by a standard argument (see for example  the proof of the corresponding result in \cite{SZ}), we prove  Lemma \ref{l2.1} below.

Let us  recall first the definition of asymptotic  expansion. If $f(\la)$ is a complex valued function, we write  $f(\la)\sim \sum_{n=0}^\infty a_n\la^n$, $a_n\in\CO$, $\la\to0$, if for any $N\in\NO_0$ one has  $\frac{f(\la)- \sum_{n=0}^N a_n\la^n}{\la^N}\to0$ as $\la\to 0$, and we say that $f$ has the asymptotic expansion  $\sum_{n=0}^\infty a_n\la^n$. Then the following result holds true:

\begin{lemma}
\label{l2.1}
Let $(A,A_0)$ be a pair of non negative self adjoint operators as above  satisfying the following conditions:
\begin{enumerate}
\item[(B.1)] the operator $R(\lambda;A)-R(\lambda;A_0)$ is of trace class for all $\lambda\in \rho(A)\cap\rho(A_0)$;
\item[(B.2)] as $\lambda\to \infty$ in $\rho(A)\cap\rho(A_0)$, there exists an asymptotic expansion of the form:
\[
\tr (R(\lambda;A)-R(\lambda;A_0))\sim \sum_{j=0}^\infty \sum_{k=0}^{K_j} a_{j,k} (-\lambda)^{\alpha_j} \log ^k (-\lambda),
\]
where $a_{j,k}\in \CO$,  $-\infty<\dots<\alpha_1<\alpha_0$, $\alpha_j\to -\infty$, for large $j$; 
\item[(B.3)] as $\lambda\to 0$, there exists an asymptotic expansion of the form
\[
\tr (R(\lambda;A)-R(\lambda;A_0))\sim \sum_{j=0}^\infty b_{j} (-\lambda)^{\beta_j};
\]
where $b_j\in\CO$, $-1\leq\beta_0<\beta_1<\dots$, and $\beta_j\to +\infty$, for large $j$,
\end{enumerate}

\begin{enumerate}
\item[(C)] $\alpha_0<\beta_0$.
\end{enumerate}

Then the relative zeta function is defined by
\[
\zeta(s;A,A_0)=\frac{1}{\Gamma(s)}\int_0^\infty t^{s-1} \tr\left(\e^{-tA}-\e^{-tA_0 }\right)dt,
\]
when $\alpha_0+1<\Re(s)<\beta_0+1$, and by analytic continuation elsewhere. Here $\Gamma$ is the classical Gamma function and
\[
\tr\left(\e^{-tA}-\e^{-tA_0 }\right)=\frac{1}{2\pi i} \int_{\Lambda}\e^{-\lambda t} \tr (R(\lambda; A)-R(\lambda;A_0)) d\lambda,
\]
where $\Lambda$ is some contour of Hankel type (see, e.g., \cite{Erd54, S06}).    The analytic extension of $\zeta(s;A,A_0)$ is regular except for possible simple poles at $s=\beta_j$ and possible further  poles at $s=\alpha_j$.
\end{lemma}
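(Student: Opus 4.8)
The plan is to follow the argument used for the corresponding statement in \cite[Section~2]{SZ} and \cite[Section~2]{Muller}, adapted to the present more general hypotheses; the only genuine novelty is the presence of the logarithmic terms in (B.2) and the wider range of exponents in (B.3). The argument has four steps: (i) represent the relative heat trace as a Hankel contour integral of the relative resolvent trace; (ii) transfer the resolvent expansions (B.2)--(B.3) into asymptotic expansions of the relative heat trace as $t\to 0^+$ and as $t\to\infty$; (iii) use those to locate the vertical strip in which the Mellin integral defining $\zeta(s;A,A_0)$ converges; (iv) split that integral at $t=1$ and continue each half meromorphically to all of $\CO$, reading off the poles.

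For step (i) I would fix a contour $\Lambda$ of Hankel type contained in $\rho(A)\cap\rho(A_0)$, which is possible because $A,A_0\ge 0$ have spectrum in $[0,\infty)$: one may take $\Lambda$ to come in from $+\infty$ just below the positive axis, encircle the origin, and return to $+\infty$ just above it. By the functional calculus $\e^{-tA}=\frac{1}{2\pi i}\int_\Lambda \e^{-\lambda t}R(\lambda;A)\,d\lambda$, and likewise for $A_0$; subtracting and using (B.1) to bring $\tr$ inside the integral gives the displayed identity for $\tr(\e^{-tA}-\e^{-tA_0})$, the contour integral being absolutely convergent for every $t>0$ thanks to the decay of $\tr(R(\lambda;A)-R(\lambda;A_0))$ at infinity granted by (B.2) together with the factor $\e^{-\lambda t}$. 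For step (ii), the $t\to 0^+$ expansion follows by inserting the truncated expansion (B.2) into the contour integral and applying the elementary formula $\frac{1}{2\pi i}\int_\Lambda \e^{-\lambda t}(-\lambda)^{\alpha}\,d\lambda=t^{-\alpha-1}/\Gamma(-\alpha)$ together with its derivatives in $\alpha$, which generate the $\log^k(-\lambda)$ weights and, on the other side, $t^{-\alpha-1}$ times a polynomial in $\log t$ of degree $\le k$; this produces an asymptotic expansion of $\tr(\e^{-tA}-\e^{-tA_0})$ in powers of $t$ and $\log t$ whose leading behaviour is governed by $\alpha_0$ and $K_0$. Inserting the truncated expansion (B.3) in the same way gives the $t\to\infty$ expansion, a pure power series in $t$ with leading term governed by $\beta_0$ (a constant when $\beta_0=-1$).

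Step (iii) is then a matter of inspection: near $t=0$ the integrand $t^{s-1}\tr(\e^{-tA}-\e^{-tA_0})$ behaves like $t^{\Re(s)-\alpha_0-2}$ times a power of $\log t$, hence is integrable precisely for $\Re(s)>\alpha_0+1$, while near $t=\infty$ it behaves like $t^{\Re(s)-\beta_0-2}$, hence is integrable precisely for $\Re(s)<\beta_0+1$; assumption (C), $\alpha_0<\beta_0$, makes this strip nonempty, and there the defining formula for $\zeta(s;A,A_0)$ is meaningful. For step (iv) I would write $\Gamma(s)\,\zeta(s;A,A_0)=\big(\int_0^1+\int_1^\infty\big)t^{s-1}\tr(\e^{-tA}-\e^{-tA_0})\,dt$. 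On $[1,\infty)$, subtracting the first $N$ terms of the $t\to\infty$ expansion leaves a remainder integral holomorphic for $\Re(s)$ below a bound that grows without limit as $N\to\infty$, while each subtracted monomial --- a pure power of $t$ --- integrates to a rational function of $s$ with a single simple pole; collecting these over $j$ and dividing by $\Gamma(s)$ yields the family of at most simple poles indexed by the $\beta_j$. On $(0,1]$, subtracting the first $N$ terms of the $t\to0^+$ expansion leaves a remainder holomorphic for $\Re(s)$ above a bound that decreases to $-\infty$ as $N\to\infty$, while each subtracted monomial --- a power of $t$ times $(\log t)^m$ with $m\le K_j$ --- integrates, via the identity $\int_0^1 t^{p-1}(\log t)^m\,dt=(-1)^m m!\,p^{-m-1}$ valid for $\Re(p)>0$, to a function of $s$ with a pole of order $m+1\le K_j+1$; these give the possible further, higher order poles indexed by the $\alpha_j$. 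Since for each $N$ the remainder terms are holomorphic on the indicated half-planes and these exhaust $\CO$, the function $\zeta(s;A,A_0)$ extends meromorphically to $\CO$ with the asserted pole structure, individual poles being possibly cancelled by the zeros of $1/\Gamma(s)$ at the non-positive integers.

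The step I expect to be the main obstacle is the transfer in (ii): one has to show that, after inserting the truncated expansion (B.2) (resp.\ (B.3)) into the contour integral, the contribution of the remaining tail is itself $O(t^{-\alpha_{N+1}-1})$ as $t\to0^+$ (resp.\ $O(t^{-\beta_{N+1}-1})$ as $t\to\infty$), uniformly in $t$ on the relevant half-line. This requires combining the remainder estimates implicit in the notion of asymptotic expansion with integrability along $\Lambda$ and the rescaling $\lambda\mapsto\lambda/t$, and in the $t\to0^+$ case also a preliminary deformation of $\Lambda$ away from the origin so that (B.2) is applicable along all of it; this is precisely the point where the hypotheses (B.1)--(B.3) and the non-negativity and self-adjointness of $A$ and $A_0$ enter essentially, and it is the analogue of the estimates carried out in \cite{SZ} and \cite{Muller}.
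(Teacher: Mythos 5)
Your proposal is correct and follows essentially the same route as the paper, which does not reproduce a proof of Lemma \ref{l2.1} but explicitly defers to the standard argument of \cite{SZ} and \cite{Muller}: Hankel-contour representation of the relative heat trace, transfer of (B.2)--(B.3) into small- and large-$t$ expansions via $\frac{1}{2\pi i}\int_\Lambda \e^{-\lambda t}(-\lambda)^{\alpha}\,d\lambda=t^{-\alpha-1}/\Gamma(-\alpha)$ and its $\alpha$-derivatives, identification of the convergence strip $\alpha_0+1<\Re(s)<\beta_0+1$, and term-by-term meromorphic continuation after splitting the Mellin integral at $t=1$. One point you should make explicit rather than leave as ``poles indexed by the $\beta_j$ (resp.\ $\alpha_j$)'': your computation places the simple poles at $s=\beta_j+1$ and the higher-order poles (of order up to $K_j+1$) at $s=\alpha_j+1$, which is what the paper's own Lemma \ref{rzf} and Corollary \ref{c:2.6} confirm, so the locations ``$s=\beta_j$'' and ``$s=\alpha_j$'' in the statement of Lemma \ref{l2.1} are off by one and you should not adjust your (correct) bookkeeping to match them.
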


Note that the poles of the relative zeta function at $s=\alpha_j$ can be of higher orders, differently from the case investigated in \cite{SZ}.


Introducing the relative spectral measure, we have the following useful representation  of the relative zeta function.

\begin{lemma}\label{z1} Let $(A,A_0)$ be a pair of non negative self adjoint operators as above satisfying conditions (B.1)-(B.3), and (C) of Lemma \ref{l2.1}. Then,
\[
\zeta(s;A,A_0) =\int_0^\infty v^{-2s}e(v;A,A_0) dv ,
\]
where the relative spectral measure is defined by
\begin{align}
\label{spm}e(v;A,A_0)&=\frac{v}{\pi i}\lim_{\epsilon\to 0^+}\left(r(v^2\e^{2i\pi-i\epsilon};A,A_0)-r(v^2\e^{i\epsilon};A,A_0)\right)\qquad v\geq 0,\\
\label{rrr}r(\lambda;A,A_0)&=\tr(R(\lambda;A)-R(\lambda;A_0)) \qquad \la\in\rho(A)\cap\rho(A_0).
\end{align}

The integral, the limit and the trace exist.
\end{lemma}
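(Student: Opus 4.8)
The plan is to deduce the stated representation from the heat-kernel formula for $\zeta(s;A,A_0)$ in Lemma \ref{l2.1}, by collapsing the Hankel contour $\Lambda$ onto the half-line $[0,\infty)$ --- the common (purely continuous) spectrum of $A$ and $A_0$, hence the cut across which the relative resolvent trace is discontinuous. Starting from
\[
\tr\left(\e^{-tA}-\e^{-tA_0}\right)=\frac{1}{2\pi i}\int_{\Lambda}\e^{-\lambda t}\,r(\lambda;A,A_0)\,d\lambda
\]
and using that $\lambda\mapsto r(\lambda;A,A_0)$ is holomorphic on $\CO\setminus[0,\infty)$ (by (B.1) and the spectral assumption), I would deform $\Lambda$ to the two edges of the cut; the small arc around the origin contributes nothing in the limit by (B.3), and what survives is the jump of $r$ across $(0,\infty)$,
\[
\tr\left(\e^{-tA}-\e^{-tA_0}\right)=\frac{1}{2\pi i}\int_0^\infty \e^{-xt}\left(r(x\e^{2i\pi-i0};A,A_0)-r(x\e^{i0};A,A_0)\right)dx .
\]
The substitution $x=v^2$, together with the definition \eqref{spm} of $e(v;A,A_0)$, rewrites this as the relative heat trace expressed through the relative spectral measure,
\[
\tr\left(\e^{-tA}-\e^{-tA_0}\right)=\int_0^\infty \e^{-v^2 t}\,e(v;A,A_0)\,dv .
\]

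Next I would insert this last identity into the formula of Lemma \ref{l2.1},
\[
\zeta(s;A,A_0)=\frac{1}{\Gamma(s)}\int_0^\infty t^{s-1}\left(\int_0^\infty \e^{-v^2 t}\,e(v;A,A_0)\,dv\right)dt ,
\]
interchange the two integrations, and evaluate the inner one by Euler's formula $\int_0^\infty t^{s-1}\e^{-v^2 t}\,dt=\Gamma(s)\,v^{-2s}$; cancelling $\Gamma(s)$ leaves exactly $\zeta(s;A,A_0)=\int_0^\infty v^{-2s}e(v;A,A_0)\,dv$ on the strip $\alpha_0+1<\Re(s)<\beta_0+1$. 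The use of Fubini is licit precisely because of (B.2), (B.3) and (C): reading off the jump shows $e(v;A,A_0)=O\!\left(v^{2\beta_0+1}\right)$ as $v\to0^+$ (from (B.3)) and $e(v;A,A_0)=O\!\left(v^{2\alpha_0+1}(\log v)^{K_0}\right)$ as $v\to\infty$ (from the leading term of (B.2)), so that the double integral converges absolutely exactly on that strip, which by (C) is nonempty and which, by Lemma \ref{l2.1}, is also where the heat-kernel representation holds. These same estimates give the convergence of $\int_0^\infty v^{-2s}e(v;A,A_0)\,dv$, while the existence of the trace in \eqref{rrr} is just hypothesis (B.1).

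The one genuinely delicate point --- and where I expect the real work to be --- is the justification of the contour collapse, i.e.\ the existence of the boundary values $r(x\e^{i0};A,A_0)$ and $r(x\e^{2i\pi-i0};A,A_0)$, equivalently the existence of the limit defining $e(v;A,A_0)$. Here one exploits that $r(\cdot\,;A,A_0)$ is holomorphic on $\CO\setminus[0,\infty)$ and, by (B.2) and (B.3), of at most polynomial growth there, so that it admits boundary values on $(0,\infty)$ from either side; this is the step carried out in detail for the original setting in \cite[Sect.~2]{SZ} (see also \cite{Muller}), and the same reasoning applies here. One must also verify, again from (B.3), that the arc around $\lambda=0$ drops out in the limit and that the boundary integral over $(0,\infty)$ converges at the origin. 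With these points settled, the identity follows by the elementary manipulations described above.
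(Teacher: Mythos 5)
Your argument is correct and follows essentially the same route as the paper: the paper likewise starts from the Hankel-contour representation of the relative heat trace, passes to the variable $k=\lambda^{1/2}$ and collapses the contour onto the cut $[0,\infty)$ so that the jump of $r$ across the cut yields $e(v;A,A_0)$, and then Mellin-transforms to get $\zeta(s;A,A_0)=\int_0^\infty v^{-2s}e(v;A,A_0)\,dv$. The paper compresses all of this into ``a standard computation,'' so your more explicit treatment of the boundary values, the arc at the origin, and the Fubini step is a faithful (indeed fuller) rendering of the same proof.
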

\begin{proof} Since $(A,A_0)$ satisfies (B.1)-(B.3), we can write
\[
\tr\left(\e^{-tA }-\e^{-tA_0 }\right)=\frac{1}{2\pi i}\int_{\Lambda} \e^{-\lambda t} \tr(R(\lambda;A)-R(\lambda;A_0))d\lambda.
\]

Changing the spectral variable $\la$ to $k=\lambda^{\frac{1}{2}}$, with the principal value of the square root, i.e. with $0<\arg k<\pi$, we get
\[
{\rm Tr}\left(\e^{-tA}-\e^{-tA_0  }\right)=\frac{1}{\pi i}\int_\gamma \e^{-k^2 t} \tr(R(k^2;A)-R(k^2;A_0))k d k,
\]
where $\gamma$ is the line $k=-ic$, for some $c>0$. Writing
$k=v\e^{i\theta}$, $0\leq\theta<2\pi$, and $r(\lambda;A,A_0)=\tr(R(\lambda;A)-R(\lambda;A_0))$, a standard computation leads to
\begin{align*}
{\rm Tr}\left(\e^{-tA }-\e^{-tA_0 }\right)&=\int_0^\infty \e^{-v^2 t}e(v;A,A_0) dv,\\
\zeta(s;A,A_0) &=\int_0^\infty v^{-2s}e(v;A,A_0) dv .
\end{align*}
\end{proof}

\begin{remark}
The relative spectral measure is discussed in general, e.g., in \cite{Muller}. It is expressed by \eqref{rrr} in terms of $r(\lambda;A,A_0)$ which is the Laplace transform of $\tr\left(\e^{-tA }-\e^{-tA_0 }\right)$, which in turn is simply related to the spectral shift function (see Eq. (0.6) in \cite{Muller}). The derivative of the latter is essentially the density of states used, e.g., in \cite{NP12} in connection with the Casimir effect, and going back to the original work by M.G. Krein and M.Sh. Birman \cite{birman-krein, krein1,krein2}.
 \end{remark}

It is clear by construction that the analytic properties of the relative zeta function are determined by the asymptotic expansions required in conditions (B.1) and (B.2). More precisely, such conditions imply similar conditions on the expansion of the relative spectral measure, and hence on the analytic structure of the relative zeta function. This is in the next lemmas.

\begin{lemma}\label{expmean} Let $(A,A_0)$ be a pair of non negative self adjoint operators as in Lemma \ref{z1}. Then the relative spectral measure $e(v;A,A_0)$ has the following asymptotic expansions. For small  $v\geq0$:
\[
e(v;A,A_0)\sim\sum_{j=0}^\infty c_j v^{2\beta_j+1},
\]
where
\[
c_j=-\frac{2b_j \sin \pi\beta_j}{\pi},
\]
and the $\beta_j$ and the $b_j$ are the numbers appearing in condition (B3) of Lemma \ref{l2.1}; for large $v\geq 0$ and $j\in\NO_0$:
\begin{align*}
e(v;A,A_0)&\sim \sum_{j=0}^\infty \sum_{h=0}^{H_j}  e_{j,h}  v^{2\alpha_j+1}\log ^h v\\
&\sim\sum_{j=0}^\infty \sum_{k=0}^{K_j} \sum_{h=0}^k e_{j,k,h}  v^{2\alpha_j+1}\log ^h v^2 ,
\end{align*}
where
\[
e_{j,k,h}=-  a_{j,k} (\pi i)^{k-h-1}\binom{k}{h}\left(\e^{i\alpha_j\pi}-(-1)^{k-h}\e^{-i\alpha_j\pi}\right),
\]
and the $a_{j,k}$, $\alpha_j$,  and $K_j$ are the numbers appearing in condition (B2) of Lemma \ref{l2.1}. The coefficients $e_{j,h}$ can be expressed in terms of the coefficients $e_{j,k,h}$.

\end{lemma}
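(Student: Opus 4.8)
The plan is to substitute the asymptotic expansions of $r(\la;A,A_0)=\tr(R(\la;A)-R(\la;A_0))$ provided by conditions (B2) and (B3) directly into the representation \eqref{spm} of the relative spectral measure, and to read off the coefficients $c_j$ and $e_{j,k,h}$ by tracking the boundary values of the elementary functions $(-\la)^{\al}$ and $\log(-\la)$ on the two lips of the cut $[0,\infty)$. The two evaluations $\la=v^2\e^{2i\pi-i\ve}$ and $\la=v^2\e^{i\ve}$ appearing in \eqref{spm} become, as $\ve\to0^+$, the two boundary values of $-\la$ across its cut along the negative real axis, on which $\log(-\la)\to\log v^2+i\pi$ and $\log(-\la)\to\log v^2-i\pi$ respectively, and hence $(-\la)^{\al}\to v^{2\al}\e^{i\pi\al}$ and $(-\la)^{\al}\to v^{2\al}\e^{-i\pi\al}$. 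That the limit, the integral and the trace exist is already part of Lemma \ref{z1}, so no separate justification of convergence is needed here.

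For the small-$v$ expansion I would insert $r(\la;A,A_0)\sim\sum_j b_j(-\la)^{\bt_j}$ into \eqref{spm}: since the boundary values of $(-\la)^{\bt_j}$ differ precisely by the factor $\e^{\mp i\pi\bt_j}$, performing the subtraction in \eqref{spm} and multiplying by $v/(\pi i)$ gives a $j$-th contribution $c_j v^{2\bt_j+1}$ with $c_j=-\f{2 b_j\sin\pi\bt_j}{\pi}$; since $\bt_0\ge-1$ and $\bt_j\to+\infty$, the exponents $2\bt_j+1$ are bounded below and increase to $+\infty$, so this is a genuine asymptotic expansion as $v\to0^+$. For the large-$v$ expansion I would insert $r(\la;A,A_0)\sim\sum_j\sum_k a_{j,k}(-\la)^{\al_j}\log^k(-\la)$; the extra ingredient is the binomial identity $\log^k(-\la)=(\log v^2\pm i\pi)^k=\sum_{h=0}^k\binom{k}{h}(\log v^2)^h(\pm i\pi)^{k-h}$, valued on the two sides of the cut. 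Combining with the phase $\e^{\pm i\pi\al_j}$ coming from $(-\la)^{\al_j}$, subtracting the two boundary values, and multiplying by $v/(\pi i)$, the coefficient of $v^{2\al_j+1}(\log v^2)^h$ is exactly $e_{j,k,h}=-a_{j,k}(\pi i)^{k-h-1}\binom{k}{h}\bigl(\e^{i\al_j\pi}-(-1)^{k-h}\e^{-i\al_j\pi}\bigr)$, which is the second asserted form with $H_j=K_j$. The first form, in powers of $\log v$, follows by $\log v^2=2\log v$, giving $e_{j,h}=2^h\sum_{k=h}^{K_j}e_{j,k,h}$ — the claimed expressibility of the $e_{j,h}$ through the $e_{j,k,h}$.

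The point that requires genuine care, and which I expect to be the main obstacle, is to justify that this formal term-by-term computation truly yields asymptotic expansions of $e(v;A,A_0)$ and not merely formal series: one must check that the expansions (B2) and (B3), valid as $\la\to\infty$ resp.\ $\la\to0$ in $\rho(A)\cap\rho(A_0)$, hold with enough uniformity in $\arg\la$ near the two lips of the cut for the limit $\ve\to0^+$ in \eqref{spm} to be interchanged with the expansion, and that the remainders are thereby carried into admissible remainders for $e(v;A,A_0)$. This is handled exactly as in the corresponding argument of \cite{SZ} (see also \cite{Muller}), on which Lemma \ref{z1} itself rests: the Hankel-contour representation of $\tr(\e^{-tA}-\e^{-tA_0})$ together with the deformation used to pass to the variable $k=\la^{1/2}$ turns each monomial $(-\la)^{\al_j}\log^k(-\la)$ into its contribution to $e(v;A,A_0)$ with a controlled error, and the bookkeeping above identifies the resulting coefficients. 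The remaining verifications — the ordering of the exponents and their limits, and the finiteness of the range of $h$ for each fixed $j$ — are immediate from the hypotheses on $\al_j$, $\bt_j$ and $K_j$ in Lemma \ref{l2.1}.
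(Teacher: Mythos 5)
Your proposal follows essentially the same route as the paper's own proof: both insert the expansions (B2)/(B3) into the representation \eqref{spm}, track the boundary values of $(-\la)^{\al}$ and $\log(-\la)$ on the two lips of the cut (so that $\log(-\la)\to\log v^2\pm i\pi$), and use the binomial theorem to extract the coefficients $c_j$ and $e_{j,k,h}$, arriving at the same formulas (the paper, like you, treats the substitution formally and does not dwell on the uniformity issue you flag). Your identification $e_{j,h}=2^h\sum_{k\ge h}e_{j,k,h}$ with $H_j=K_j$ is exactly the "expressibility" the lemma asserts, so the proposal is correct and matches the paper's argument.
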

\begin{proof} Note that the cut $(0,\infty)$ in the complex $\lambda$-plane corresponds to the cut $(-\infty,0)$ in the complex $-\lambda$-plane. Thus $-\lambda=x\e^{i\theta}$, with $-\pi\theta<\pi$, and $\theta=0$ corresponds to positive real  values of $-\lambda$.

Thus, inserting the expansion (B3) for small $\lambda$ in the definition of the relative spectral measure, equation (\ref{spm}), we obtain,  for small $v$,
\[
e(v;A,A_0)\sim-\frac{v}{i\pi}\lim_{\epsilon\to 0^+} \sum_{j=0}^\infty b_j  v^{2\beta_j}\left( \e^{(\pi i-i\epsilon) \beta_j}-\e^{(-\pi i+i\epsilon) \beta_j}\right),
\]
and the first part of the statement follows. For the expansion for large $v$, we insert (B2) into the definition of the relative spectral measure. This gives, for large $v$,
\begin{align*}
&e(v;A,A_0) \\
\sim&-\frac{v}{i\pi}\lim_{\epsilon\to 0^+} \sum_{j=0}^\infty \sum_{k=0}^{K_j} a_{j,k}  v^{2\alpha_j} \\
& \qquad\qquad \times \left( \e^{(\pi i-i\epsilon) \alpha_j}(\log v^2+\pi i-i\epsilon)^k -\e^{(-\pi i+i\epsilon) \alpha_j}(\log v^2-\pi i+i\epsilon)^k\right),\\
&e(v;A,A_0) \\
\sim & - \sum_{j=0}^\infty v^{2\alpha_j+1} \sum_{k=0}^{K_j} \frac{a_{j,k}}{i\pi}  \\
& \qquad\qquad \times
\left( \e^{\pi i \alpha_j}\sum_{h=0}^k \binom{k}{h} (i\pi)^{k-h}\log^k v^2 -\e^{-\pi i \alpha_j}\sum_{h=0}^k \binom{k}{h} (-i\pi)^{k-h}\log^k v^2 \right),
\end{align*}
and the thesis follows. \end{proof}

\begin{remark} We give more details on the first coefficients  that are more relevant in the present work. Direct calculation gives
\begin{align*}
e_{j,0}&=e_{j,0,0}+2\sum_{k=1}^{K_j} e_{j,k,0}=-\sum_{k=0}^{K_j}a_{j,k}(\pi i)^{k-1}\left(\e^{i\alpha_j \pi}-(-1)^k \e^{-i\alpha_j \pi}\right),\\
e_{j,0,0}&=-\frac{2\sin\pi\alpha_j}{\pi}a_{j,0}.
\end{align*}
\end{remark}

\subsection{Relative partition function}

Let $W$ be a smooth Riemannian manifold of dimension $n$, and
consider the product $X=S^1_\frac{\beta}{2\pi}\times W$, where
$S^1_r$ is the circle of radius $r$, $\beta>0$. Let $\xi$ be a complex line
bundle over $X$, and $L$ a self adjoint non negative linear operator
on the Hilbert space $\mathcal{H}(W)$ of the $L^2$ sections of the
restriction of $\xi$ onto $W$, with respect to some fixed metric $g$
on $W$. Let $L$ be the self adjoint non negative operator
$L=-\p^2_u+A$, on the Hilbert space $\mathcal{H}(X)$ of the $L^2$ sections of
$\xi$, with respect to the product metric  $du^2\oplus g$  on $X$,
and with periodic boundary conditions on the circle. Assume that
there exists a second operator $A_0$ defined on $\mathcal{H}(W)$, such that
the pair $(A,A_0)$ satisfies the assumptions (B.1)-(B.3) of Lemma
\ref{l2.1}. Then, by a proof similar to the one of Lemma 2.1 of \cite{SZ}, it is possible to show that there exists a second operator $L_0$
defined in $\mathcal{H}(X)$, such that the pair $(L,L_0)$ satisfies those
assumptions too. Under these requirements, we define the regularized relative
zeta  partition function of the model described by the
pair of operators $(L,L_0)$ by
\beq\label{partpart}
\log Z_{\rm R}=\frac{1}{2}\Rz_{s=0}\zeta'(s;L,L_0)-\frac{1}{2}\Rz_{s=0}\zeta(s;L,L_0)\log \ell^2,
\eeq
where $\ell$ is some renormalization constant (introduced by Hawking \cite{Hawking}, see also, e.g., \cite{Mo}, in connection with the scaling behavior in path integrals in curved spaces), and we have the following result, in which $\log Z_R$ is essentially expressed  in terms of the relative Dedekind  eta function $\eta(\beta;A,A_0)$.
\begin{proposition}
\label{p1}
Let $A$ be a non negative self adjoint operator on $W$, and $L=-\p^2_u+A$, on $S^1_\frac{\beta}{2\pi}\times W$ as defined above. Assume there exists an operator $A_0$ such that the pair $(A,A_0)$ satisfies conditions (B.1)-(B.3) of Lemma \ref{l2.1}. Then, the relative zeta function $\zeta(s;L,L_0)$ (defined analogously to the one given in Lemma \ref{l2.1}) has a simple pole at $s=0$ with residua:
\begin{align*}
\Ru_{s=0}\zeta(s;L,L_0)=&-\beta \Rd_{s=-\frac{1}{2}}\zeta(s;A,A_0),\\
\Rz_{s=0}\zeta(s;L,L_0)=&-\beta\Ru_{s=-\frac{1}{2}}
\zeta(s;A,A_0)-2\beta(1-\log 2)\Rd_{s=-\frac{1}{2}}
\zeta(s;A,A_0),\\
\Rz_{s=0}\zeta'(s;L,L_0)=&-\beta \Rz_{s=-\frac{1}{2}}
\zeta(s;A,A_0)-2\beta(1-\log 2)\Ru_{s=-\frac{1}{2}}
\zeta(s;A,A_0)\\
&-\beta \left(2+\frac{\pi^2}{6}+2(1-\log 2)^2\right)\Rd_{s=-\frac{1}{2}}
\zeta(s;A,A_0)\\
&-2\log\eta(\beta ;A,A_0),
\end{align*}
where $L_0=-\p^2_u+A_0$, and the relative Dedekind eta function is
defined by
\begin{align*}
\log\eta(\tau;A,A_0)&=\int_0^\infty \log\left(1-\e^{-\tau v}\right) e(v;A,A_0) dv, \qquad \tau>0.
\end{align*}
$\displaystyle\Rk_{s=s_0}\zeta(s)$ is understood as the coefficient of the term $(s-s_0)^{-k}$ in the Laurent expansion of $\zeta(s)$ around $s=s_0$.\\
The residua and the integral are finite.
\end{proposition}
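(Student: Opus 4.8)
The plan is to reduce the spectral problem on $X=S^1_{\beta/2\pi}\times W$ to the spectral problem on $W$ by separating variables on the circle. Since $L=-\partial_u^2+A$ with periodic boundary conditions, the spectrum of $L$ (as a ``relative'' object with respect to $L_0=-\partial_u^2+A_0$) is obtained by adding $(2\pi n/\beta)^2$, $n\in\Z$, to the spectrum of the pair $(A,A_0)$. Concretely, I would start from the relative heat trace representation of Lemma \ref{l2.1}, namely
\[
\tr\bigl(\e^{-tL}-\e^{-tL_0}\bigr)=\Bigl(\sum_{n\in\Z}\e^{-t(2\pi n/\beta)^2}\Bigr)\tr\bigl(\e^{-tA}-\e^{-tA_0}\bigr),
\]
and then apply the Jacobi/Poisson summation formula to the theta function $\Theta_\beta(t):=\sum_{n\in\Z}\e^{-t(2\pi n/\beta)^2}=\frac{\beta}{2\sqrt{\pi t}}\sum_{m\in\Z}\e^{-\beta^2 m^2/4t}$. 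Splitting the $m=0$ term from the rest gives $\Theta_\beta(t)=\frac{\beta}{2\sqrt{\pi}}t^{-1/2}+\frac{\beta}{\sqrt{\pi}}t^{-1/2}\sum_{m\ge1}\e^{-\beta^2 m^2/4t}$; the first piece produces the ``zeta'' contribution and the second the ``eta'' contribution.

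Next I would insert this into the Mellin transform defining $\zeta(s;L,L_0)=\frac{1}{\Gamma(s)}\int_0^\infty t^{s-1}\Theta_\beta(t)\,\tr(\e^{-tA}-\e^{-tA_0})\,dt$. The $m=0$ term contributes $\frac{\beta}{2\sqrt\pi}\frac{1}{\Gamma(s)}\int_0^\infty t^{s-3/2}\tr(\e^{-tA}-\e^{-tA_0})\,dt=\frac{\beta}{2\sqrt\pi}\frac{\Gamma(s-\tfrac12)}{\Gamma(s)}\zeta(s-\tfrac12;A,A_0)$, using the defining integral of $\zeta(\,\cdot\,;A,A_0)$ shifted by $1/2$. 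The factor $\Gamma(s-\tfrac12)/\Gamma(s)$ is regular and nonzero at $s=0$ (equal to $-2\sqrt\pi$ there since $\Gamma(-\tfrac12)=-2\sqrt\pi$ and $1/\Gamma(0)=0$ — more precisely one expands $1/\Gamma(s)=s+\gamma s^2+\dots$), so the pole structure of this term at $s=0$ comes entirely from the Laurent expansion of $\zeta(s-\tfrac12;A,A_0)$ at $s=0$, i.e. of $\zeta(\,\cdot\,;A,A_0)$ at $-\tfrac12$. Because condition (B.2)/(B.3) allows a double pole of $\zeta(\,\cdot\,;A,A_0)$ at $-\tfrac12$ (the $\log$ terms in (B.2) translate, via Lemma \ref{expmean}, into a double pole), multiplying the Laurent series $\Rd_{-1/2}(s+\tfrac12)^{-2}+\Ru_{-1/2}(s+\tfrac12)^{-1}+\Rz_{-1/2}+\dots$ by the Taylor expansion of $\frac{\beta}{2\sqrt\pi}\Gamma(s-\tfrac12)/\Gamma(s)$ around $s=0$ and collecting the coefficients of $s^{-1}$, $s^0$, and (for $\zeta'$) $s^0$ after differentiating, yields exactly the stated combinations. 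The numerical constants $1-\log2$, $2+\pi^2/6+2(1-\log2)^2$ come from the Taylor coefficients of $\Gamma(s-\tfrac12)/\Gamma(s)$ at $s=0$: one needs the expansion up to order $s^2$, involving $\psi(-\tfrac12)=2-\gamma-2\log2$ and $\psi'(-\tfrac12)$, which combine with the $\gamma$'s from $1/\Gamma(s)$ so that $\gamma$ cancels.

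The $m\ge1$ terms give $\frac{\beta}{\sqrt\pi}\frac1{\Gamma(s)}\sum_{m\ge1}\int_0^\infty t^{s-3/2}\e^{-\beta^2m^2/4t}\tr(\e^{-tA}-\e^{-tA_0})\,dt$; this is entire in $s$ near $s=0$ because the factor $\e^{-\beta^2m^2/4t}$ kills the small-$t$ singularity, so it contributes only to $\Rz_{s=0}\zeta'(s;L,L_0)$ through its value at $s=0$, namely (using $1/\Gamma(s)\sim s$) $\frac{\beta}{\sqrt\pi}\lim_{s\to0}s\sum_m\int_0^\infty t^{s-3/2}\e^{-\beta^2m^2/4t}\cdot(\dots)$. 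I would then massage this into $-2\log\eta(\beta;A,A_0)$ by switching to the spectral-measure representation $\tr(\e^{-tA}-\e^{-tA_0})=\int_0^\infty \e^{-v^2 t}e(v;A,A_0)\,dv$ from Lemma \ref{z1}, doing the $t$-integral (a Bessel-type integral giving $2K_{1/2}$, which is elementary: $K_{1/2}(z)=\sqrt{\pi/2z}\,\e^{-z}$), then summing the geometric-type series over $m$ to produce $\log(1-\e^{-\beta v})$, matching the definition of $\log\eta(\tau;A,A_0)$. Finiteness of the residua and of the $\eta$-integral follows from the asymptotic expansions of $e(v;A,A_0)$ in Lemma \ref{expmean}: at $v\to0$ the exponent $2\beta_0+1\ge -1$ (with $\beta_0\ge-1$) is integrable against $\log(1-\e^{-\beta v})\sim\log(\beta v)$, and at $v\to\infty$ the factor $\log(1-\e^{-\beta v})$ decays exponentially, dominating any polynomial-times-log growth from the expansion.

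I expect the main obstacle to be bookkeeping rather than conceptual: carrying the Laurent/Taylor expansions of $\Gamma(s-\tfrac12)/\Gamma(s)$ and of $\zeta(s-\tfrac12;A,A_0)$ to high enough order (second order, because of the double pole and because $\zeta'$ is wanted), and correctly handling the differentiation $\partial_s[\,\cdot\,]$ of a product of a regular-at-$0$ analytic factor with a function having a double pole, so that the $\Rz_{s=0}\zeta'$ formula picks up the three distinct pieces $\Rz_{-1/2}$, $\Ru_{-1/2}$, $\Rd_{-1/2}$ with the right coefficients. A secondary technical point is justifying the interchange of $\sum_{n\in\Z}$ (or $\sum_{m\in\Z}$ after Poisson summation) with the Mellin integral and with the trace, and checking that the pair $(L,L_0)$ indeed satisfies (B.1)--(B.3) so that $\zeta(s;L,L_0)$ is defined and the manipulations are legitimate — this is the step asserted ``by a proof similar to Lemma 2.1 of \cite{SZ}'' and I would simply invoke that, since the theta-factor only shifts exponents by $1/2$ and introduces the $\log$-free $m\ge1$ remainder.
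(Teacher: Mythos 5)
Your proposal follows essentially the same route as the paper's proof: Jacobi/Poisson summation on the circle factor, splitting off the $m=0$ term to obtain $\frac{\sqrt{\pi}\,r}{\Gamma(s)}\Gamma\left(s-\frac12\right)\zeta\left(s-\frac12;A,A_0\right)$ with $r=\beta/2\pi$, whose Laurent/Taylor expansion at $s=0$ produces the three stated residua with the constants $2(1-\log 2)$ and $2+\frac{\pi^2}{6}+2(1-\log 2)^2$, and evaluating the $m\geq 1$ remainder at $s=0$ via the relative spectral measure and the elementary $K_{\pm\frac12}$ Bessel integral to produce $-2\log\eta(\beta;A,A_0)$. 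The only slip is the parenthetical claim that $\Gamma\left(s-\frac12\right)/\Gamma(s)$ is \emph{nonzero} at $s=0$ --- it in fact vanishes there to first order, which is exactly why the admissible double pole of $\zeta\left(\cdot\,;A,A_0\right)$ at $-\frac12$ degrades to the simple pole of $\zeta(s;L,L_0)$ at $0$ --- but your subsequent use of $1/\Gamma(s)=s+\gamma s^2+\cdots$ handles this correctly, so the argument matches the paper's.
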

\begin{proof} Since $(A,A_0)$ satisfies (B.1)-(B.3),  the  $(L,L_0)$ relative  zeta function $\zeta(s;L,L_0)$ is defined by
\[
\zeta(s;L,L_0)=\frac{1}{\Gamma(s)}\int_0^\infty t^{s-1} {\rm Tr}\big(\e^{-tL}-\e^{-tL_0 }\big)dt,
\]
when $\alpha_0+1<\Re(s)<\beta_0+1$ (with $\alpha_0$ and $\beta_0$ as in Lemma \ref{l2.1}). Since (see for example  Lemma 2.2 of \cite{SZ})
\begin{align*}
{\rm Tr}\left(\e^{-Lt}-\e^{-L_0t}\right)=&\sum_{n\in \Z}\e^{-\frac{n^2}{r^2}t} {\rm Tr}\left(\e^{-tA }-\e^{-tA_0 }\right),
\end{align*}
where $r=\beta/(2\pi)$ and $t>0$. Using the Jacobi summation formula and dominated convergence to exchange summation and integration
we obtain
\[
\begin{aligned}
\zeta(s;L,L_0)=&\frac{1}{\Gamma(s)}\int_0^\infty t^{s-1} \sum_{n\in \Z}\e^{-\frac{n^2}{r^2} t}{\rm Tr}\left(\e^{-tA}-\e^{-tA_0 }\right)dt\\
=&\frac{\sqrt{\pi}r}{\Gamma(s)}\int_0^\infty t^{s-\frac{1}{2}-1} {\rm Tr}\left(\e^{-tA}-\e^{-tA_0 }\right)dt\\
&+\frac{2\sqrt{\pi}r}{\Gamma(s)}\int_0^\infty t^{s-\frac{1}{2}-1} \sum_{n=1}^\infty\e^{-\frac{\pi^2 r^2 n^2}{t}}{\rm Tr}\left(\e^{-tA}-\e^{-tA_0 }\right)dt\\
=&z_1(s)+z_2(s),
\end{aligned}
\]
with
\[
\begin{aligned}
z_1(s) :=&\frac{\sqrt{\pi}r}{\Gamma(s)}\Gamma\left(s-\frac{1}{2}\right) \zeta\left(s-\frac{1}{2};A,A_0\right),\\
z_2(s) :=&\frac{2\sqrt{\pi}r}{\Gamma(s)}\sum_{n=1}^\infty\int_0^\infty t^{s-\frac{1}{2}-1} \e^{-\frac{\pi^2 r^2n^2}{t}}{\rm Tr}\left(\e^{-tA}-\e^{-tA_0 }\right)dt.
\end{aligned}
\]

The first term, $z_1(s)$, can be expanded near $s=0$, and this gives the result stated, by Lemma \ref{l2.1}. By Lemma \ref{z1}, the second term $z_2(s)$ is
\begin{align*}
z_2(s)=&\frac{2\sqrt{\pi}r}{\Gamma(s)}\sum_{n=1}^\infty\int_0^\infty t^{s-\frac{1}{2}-1} \e^{-\frac{\pi^2n^2r^2}{t}}\int_0^\infty \e^{-v^2 t} e(v;A,A_0) d vdt,\\
\end{align*}
and we can do the $t$ integral using for example \cite[3.471.9]{GraRyz07}.  We obtain
\beq\label{po}\begin{aligned}
z_2(s)=&\frac{4\sqrt{\pi} r}{\Gamma(s)}\sum_{n=1}^\infty\int_0^\infty \left(\frac{\pi n r}{v}\right)^{s-\frac{1}{2}}K_{s-\frac{1}{2}}(2\pi n rv) e(v;A,A_0) d v.\\
\end{aligned}
\eeq

Since the Bessel function $K_{s-\frac{1}{2}}(2\pi n rv)$ is analytic in its parameter, regular at $-\frac{1}{2}$, and $K_{-\frac{1}{2}}(z)=\sqrt{\frac{\pi}{2z}}\e^{-z}$, equation (\ref{po}) gives the formula for the analytic extension of the relative zeta function $\zeta(s;H,H_0)$ near $s=0$. We obtain
\[
z_2(0)=0\,, \quad
z_2'(0)=-2\int_0^\infty \log\left(1-\e^{-2\pi rv}\right)e(v;L,L_0) d v,
\]
and the integral converges by assumptions (B.2) and (B.3).

\end{proof}

It is clear by the previous result that all information on the relative partition function comes from the analytic structure of the spatial relative spectral function $\zeta(s;A,A_0)$ near $s=-\frac{1}{2}$. Such information is based on the asymptotic expansion assumed for the relative resolvent, and contained in the following lemma.

\begin{lemma}\label{rzf} Let $(A,A_0)$ be a pair of non negative self adjoint operators as in Lemma \ref{z1}. 
Then, the relative zeta function $\zeta(s;A,A_0)$ extends analytically to the following meromorphic function in a neighborhood of $s=-\frac{1}{2}$:
\begin{equation} \label{2.5a}
\begin{aligned}
\zeta(s;A,A_0)=&\frac{1}{2}\sum_{j=0}^{J_0-1}\frac{c_j}{\beta_j+1-s}
+\sum_{j=0}^{J_\infty-1}\sum_{h=0}^{H_j} \frac{(-1)^{h+1}e_{j,h}}{2^{h+1} (\alpha_j+1-s)^{h+1}}\\
&+\int_0^1 v^{-2s} \left(e(s;A,A_0)-\sum_{j=0}^{J_0} c_j v^{2\beta_j+1}\right) dv\\
&+\int_1^\infty v^{-2s} \left(e(s;A,A_0)-\sum_{j=0}^{J_\infty} \sum_{h=0}^{H_j} e_{j,h} v^{2\alpha_j+1}\log^h v\right) dv,
\end{aligned}
\end{equation}
where $J_0$ is the smallest integer such that $\beta_{J_0}>-\frac{3}{2}$, and $J_\infty$ is the largest integer such that $\alpha_{J_\infty}<-\frac{3}{2}$ (the $\al_j$, $\beta_j$ resp. $c_j$, $e_{j,h}$, $H_j$ are as in Lemma \ref{l2.1} resp. Lemma \ref{z1}).
\end{lemma}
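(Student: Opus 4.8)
The plan is to obtain the meromorphic continuation of $\zeta(s;A,A_0)$ near $s=-\tfrac12$ directly from the integral representation $\zeta(s;A,A_0)=\int_0^\infty v^{-2s}e(v;A,A_0)\,dv$ of Lemma \ref{z1}, by splitting the integral at $v=1$ and using the asymptotic expansions of $e(v;A,A_0)$ from Lemma \ref{expmean}. First I would write $\zeta(s;A,A_0)=\int_0^1 v^{-2s}e(v;A,A_0)\,dv+\int_1^\infty v^{-2s}e(v;A,A_0)\,dv$. The integral over $(0,1)$ converges for $\Re s$ small enough (since $e(v)\sim c_0 v^{2\beta_0+1}$ with $\beta_0\ge -1$) and the integral over $(1,\infty)$ converges for $\Re s$ large enough (since $e(v)\sim e_{0,0}v^{2\alpha_0+1}$ and higher log terms, with $\alpha_0<\beta_0$, the strip is nonempty by condition (C)); on that strip the splitting is legitimate. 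Then in each piece I subtract off finitely many leading terms of the relevant asymptotic expansion, integrate those explicitly, and observe that the remaining integrals converge on a neighborhood of $s=-\tfrac12$.

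Concretely, for the $(0,1)$ piece I would add and subtract $\sum_{j=0}^{J_0}c_j v^{2\beta_j+1}$, where $J_0$ is chosen (as in the statement) to be the smallest integer with $\beta_{J_0}>-\tfrac32$. The subtracted tail gives $\int_0^1 v^{-2s}\big(e(v;A,A_0)-\sum_{j=0}^{J_0}c_j v^{2\beta_j+1}\big)\,dv$, which by the definition of asymptotic expansion is $O(v^{2\beta_{J_0+1}+1-2s})$ near $0$ — wait, more carefully, since we subtracted through $j=J_0$ the remainder is $o(v^{2\beta_{J_0}+1-2s})$, hence absolutely convergent at $v=0$ for $\Re s<\beta_{J_0}+1$, in particular in a neighborhood of $s=-\tfrac12$ because $\beta_{J_0}>-\tfrac32$. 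The explicitly integrated terms contribute $\sum_{j=0}^{J_0}c_j\int_0^1 v^{2\beta_j+1-2s}\,dv=\sum_{j=0}^{J_0}\frac{c_j}{2\beta_j+2-2s}=\frac12\sum_{j=0}^{J_0}\frac{c_j}{\beta_j+1-s}$. Note the upper index in the explicit sum in \eqref{2.5a} is $J_0-1$ rather than $J_0$; this is because the $j=J_0$ term $\frac{c_{J_0}}{2\beta_{J_0}+2-2s}$ is itself regular near $s=-\tfrac12$ (its pole is at $s=\beta_{J_0}+1>-\tfrac12$), so it can be absorbed — I would either keep it inside the corrected integral by subtracting only up to $J_0-1$ there, or note the two formulations agree as meromorphic functions near $s=-\tfrac12$; I would present the bookkeeping so that the displayed formula matches, subtracting $\sum_{j=0}^{J_0}$ inside the integral but pulling out only the genuinely singular $\sum_{j=0}^{J_0-1}$.

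For the $(1,\infty)$ piece I do the symmetric thing with the large-$v$ expansion $e(v;A,A_0)\sim\sum_{j,h}e_{j,h}v^{2\alpha_j+1}\log^h v$, subtracting $\sum_{j=0}^{J_\infty}\sum_{h=0}^{H_j}e_{j,h}v^{2\alpha_j+1}\log^h v$ with $J_\infty$ the largest integer with $\alpha_{J_\infty}<-\tfrac32$. The corrected integral $\int_1^\infty v^{-2s}\big(e-\sum\cdots\big)\,dv$ then converges for $\Re s>\alpha_{J_\infty+1}+1$, i.e. near $s=-\tfrac12$ since $\alpha_{J_\infty+1}\ge-\tfrac32$ (by maximality of $J_\infty$). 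Each explicitly subtracted term contributes $e_{j,h}\int_1^\infty v^{2\alpha_j+1-2s}\log^h v\,dv$; this is a standard computation, $\int_1^\infty v^{-\mu}\log^h v\,dv=h!/\mu^{h+1}$ for $\Re\mu>1$, with $\mu=2s-2\alpha_j-1$, giving $e_{j,h}\,h!/(2s-2\alpha_j-1)^{h+1}=e_{j,h}\,h!\,(-1)^{h+1}/\big(2(\alpha_j+1-s)\big)^{h+1}$ after rewriting $2s-2\alpha_j-1=2(s-\alpha_j-\tfrac12)$ — I would reconcile the constants with the stated $\frac{(-1)^{h+1}e_{j,h}}{2^{h+1}(\alpha_j+1-s)^{h+1}}$; the factor $h!$ discrepancy suggests the $e_{j,h}$ in the final formula are normalized to absorb it, or the integral is $\int_1^\infty v^{-\mu}(\log v)^h\,dv$ is being read with a different convention, so I would simply carry the computation honestly and state the coefficients as they come out, matching Lemma \ref{expmean}. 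Assembling the four pieces — the two polar finite sums and the two corrected convergent integrals — gives \eqref{2.5a}, and since the corrected integrands are holomorphic in $s$ on the common neighborhood of $s=-\tfrac12$ (differentiation under the integral sign is justified by the same absolute-convergence bounds, uniformly on compact subsets), the right-hand side is manifestly meromorphic there with poles only at $s=\beta_j+1$ ($j<J_0$) and $s=\alpha_j+1$ ($j<J_\infty$), of the indicated orders. The main obstacle is not conceptual but the careful bookkeeping: getting the index ranges $J_0-1$ vs.\ $J_0$ and $J_\infty$ exactly right so that precisely the terms with poles in the neighborhood are pulled out, and matching the multiplicative constants (powers of $2$, signs, factorials) with the normalization of $c_j$ and $e_{j,h}$ fixed in Lemma \ref{expmean}; I would double-check these against the explicit forms of $c_j$ and $e_{j,h}$ given there.
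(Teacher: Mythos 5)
Your proposal is correct and follows essentially the same route as the paper's own proof: split the integral at $v=1$ into $\zeta_0+\zeta_\infty$, subtract the finitely many leading terms of the asymptotic expansions of $e(v;A,A_0)$ from Lemma \ref{expmean} in each piece, integrate those terms explicitly, and observe that the corrected integrals converge in a neighborhood of $s=-\tfrac12$. The bookkeeping issues you flag are genuine infelicities of the displayed formula rather than gaps in your argument --- the explicit sums should run to the same upper index as the subtractions, and $\int_1^\infty v^{-\mu}\log^h v\,dv = h!/(\mu-1)^{h+1}$ does produce an $h!$ absent from \eqref{2.5a} --- but both are harmless here since in the application $H_j\le 1$ and the extra boundary terms are regular at $s=-\tfrac12$.
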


\begin{proof}  Set
\[
\zeta_0(s;A,A_0)= \int_0^1 v^{-2s} e(v;A,A_0) dv
\]
and
\[
\zeta_\infty(s;A,A_0)=\int_1^\infty v^{-2s} e(v;A,A_0) dv,
\]
then
\[
\zeta(s;A,A_0)=\zeta_0(s;A,A_0)+\zeta_\infty(s;A,A_0).
\]

Consider the expansion of $e(s;A,A_0)$ for small $v$ given in Lemma \ref{expmean}. Let $J_0$ be the smallest integer such that $\beta_{J_0}>-\frac{3}{2}$, and write
\begin{align}
& \zeta_0(s;A,A_0) \nonumber \\
=& \int_0^1 v^{-2s}\left( \sum_{j=0}^{J_0}c_j v^{2\beta_j+1} \right)dv
+\int_0^1 v^{-2s} \left(e(s;A,A_0)-\sum_{j=0}^{J_0} c_j v^{2\beta_j+1}\right) dv.
\label{oo}
\end{align}

The last integral in equation (\ref{oo}) is convergent, while the first one can be computed explicitly. This gives the statement for $\zeta_0$, in the sense that $\zeta_0$ has a representation like in equation \eqref{2.5a}.  For $\zeta_\infty$ consider the   expansion of $e(s;A,A_0)$ for large $v$ given in Lemma \ref{expmean}. Let $J_\infty$ be the largest integer such that $\alpha_{J_\infty}<-\frac{3}{2}$, and write
\beq\label{oo1}
\begin{aligned}
\zeta_\infty(s;A,A_0)=&\int_1^\infty v^{-2s}\left( \sum_{j=0}^{J_\infty}  \sum_{h=0}^{H_j} e_{j,h} v^{2\alpha_j+1}\log^h v \right)dv\\
&+\int_1^\infty v^{-2s} \left(e(s;A,A_0)-\sum_{j=0}^{J_\infty} \sum_{h=0}^{H_j} e_{j,h} v^{2\alpha_j+1}\log^h v\right) dv.
\end{aligned}
\eeq

The last integral in equation (\ref{oo1}) is convergent, while the first one can be computed explicitly. This gives the statement for $\zeta_\infty$, in the sense that $\zeta_\infty$ has a representation like in equation \eqref{2.5a}. Putting together the representations of $\zeta_0$ and $\zeta_\infty$ concludes the proof.

\end{proof}

\begin{corollary}\label{c:2.6} Let $(A,A_0)$ be a pair of non negative self adjoint operators as in Lemma \ref{z1}.  With the notation of that lemma,
\begin{align*}
\Rd_{s=-\frac12}\zeta(s;A,A_0)=&\frac{e_{a,1}}{4},\\
\Ru_{s=-\frac12}\zeta(s;A,A_0)=&\frac{e_{a,0}}{2}-\frac{c_b}{2},\\
\Rz_{s=-\frac12}\zeta(s;A,A_0)=&\frac{1}{2}\sum_{j=0, j\not=b}^{J_0}\frac{c_j}{\beta_j+\frac32}
+\sum_{j=0, j\not=a}^{J_\infty}\sum_{h=0}^{H_j} \frac{(-1)^{h+1}e_{j,h}}{2^{h+1} (\alpha_j+\frac32)^{h+1}}\\
&+\int_0^1 v \bigg(e(-\frac12;A,A_0)-\sum_{j=0}^{J_0} c_j v^{2\beta_j+1}\bigg) dv\\
&+\int_1^\infty v \bigg(e(-\frac12;A,A_0)-\sum_{j=0}^{J_\infty} \sum_{h=0}^{H_j} e_{j,h} v^{2\alpha_j+1}\log^h v\bigg) dv,
\end{align*}
where in the lower limits of the sums $a$ is the index in the sequence $\{\alpha_j\}$, such that $\alpha_a=-\frac{3}{2}$, and $b$ is the index in the sequence $\{\beta_j\}$, such that  $\beta_b=-\frac{3}{2}$, and
\[\begin{aligned}
\Ru_{s=0}\zeta(s;L,L_0)=&-\frac{e_{a,1}}{4}\beta ,\\
\Rz_{s=0}\zeta(s;L,L_0)=&-\frac{1}{2}\left( c_b-e_{a,0}-(1-\log 2) e_{a,1}\right) \beta,\\
\Rz_{s=0}\zeta'(s;L,L_0)=&-\beta\Bigg(\frac{1}{2}\sum_{j=0, j\not=b}^{J_0}\frac{c_j}{\beta_j+3/2}
+\sum_{j=0, j\not=a}^{J_\infty}\sum_{h=0}^{H_j} \frac{(-1)^{h+1}e_{j,h}}{2^{h+1} (\alpha_j+3/2)^{h+1}}\\
&+\int_0^1 v \bigg(e(-1/2;A,A_0)-\sum_{j=0}^{J_0} c_j v^{2\beta_j+1}\bigg) dv\\
&+\int_1^\infty v \bigg(e(-1/2;A,A_0)-\sum_{j=0}^{J_\infty} \sum_{h=0}^{H_j} e_{j,h} v^{2\alpha_j+1}\log^h v\bigg) dv\Bigg)\\
&-\beta (1-\log 2)(e_{a,0}-c_b)\\
&-\beta\bigg(\frac12 +\frac{\pi^2}{24}+\frac{(1-\log 2)^2}{2}  \bigg) e_{a,1} \\
&-2\log\eta(\beta ;A,A_0).
\end{aligned}\]
\end{corollary}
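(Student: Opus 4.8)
The plan is to prove the statement in two stages. First I would read off the Laurent coefficients of $\zeta(s;A,A_0)$ at $s=-\tfrac12$ directly from the meromorphic representation \eqref{2.5a} of Lemma \ref{rzf}. Then I would substitute the resulting three numbers ($\Rd$, $\Ru$, $\Rz$ at $s=-\tfrac12$) into the formulas of Proposition \ref{p1} and simplify the numerical constants. No new analytic input is needed: everything follows from \eqref{2.5a} and Proposition \ref{p1}, which are already available.

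For the first stage I would inspect \eqref{2.5a} term by term in a neighbourhood of $s=-\tfrac12$. The two correction integrals are holomorphic there (that is exactly what the choice of $J_0$ and $J_\infty$ guarantees), so they contribute only to the regular part, with value obtained by putting $v^{-2s}=v$. In the finite sum $\tfrac12\sum_j\frac{c_j}{\beta_j+1-s}$ the unique summand that is singular at $s=-\tfrac12$ is the one with $\beta_j=-\tfrac32$, i.e.\ $j=b$; since $\beta_b+1-s=-(s+\tfrac12)$ it equals $-\tfrac{c_b}{2}(s+\tfrac12)^{-1}$, while the summands $j\neq b$ are regular with value $\tfrac12 c_j(\beta_j+\tfrac32)^{-1}$. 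In the second finite sum the only singular block is $j=a$, with $\alpha_a=-\tfrac32$; using $(\alpha_a+1-s)^{h+1}=(-(s+\tfrac12))^{h+1}=(-1)^{h+1}(s+\tfrac12)^{h+1}$ this block collapses to $\sum_h \frac{e_{a,h}}{2^{h+1}}(s+\tfrac12)^{-(h+1)}$, which under the standing hypothesis that only a double pole appears (so $H_a=1$) is $\frac{e_{a,1}}{4}(s+\tfrac12)^{-2}+\frac{e_{a,0}}{2}(s+\tfrac12)^{-1}$; the summands $j\neq a$ are regular with value $\sum_h\frac{(-1)^{h+1}e_{j,h}}{2^{h+1}(\alpha_j+3/2)^{h+1}}$. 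Reading off the coefficients of $(s+\tfrac12)^{-2}$, $(s+\tfrac12)^{-1}$ and $(s+\tfrac12)^0$ then yields precisely the stated expressions for $\Rd_{s=-1/2}\zeta(s;A,A_0)$, $\Ru_{s=-1/2}\zeta(s;A,A_0)$ and $\Rz_{s=-1/2}\zeta(s;A,A_0)$.

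For the second stage I would feed $\Rd_{s=-1/2}\zeta(s;A,A_0)=\frac{e_{a,1}}{4}$, $\Ru_{s=-1/2}\zeta(s;A,A_0)=\frac{e_{a,0}}{2}-\frac{c_b}{2}$ and the above expression for $\Rz_{s=-1/2}\zeta(s;A,A_0)$ into the three identities of Proposition \ref{p1}. The formula for $\Ru_{s=0}\zeta(s;L,L_0)$ is then immediate. For $\Rz_{s=0}\zeta(s;L,L_0)$ one combines $-\beta\bigl(\frac{e_{a,0}}{2}-\frac{c_b}{2}\bigr)-2\beta(1-\log 2)\frac{e_{a,1}}{4}$ into $-\tfrac12\bigl(c_b-e_{a,0}-(1-\log 2)e_{a,1}\bigr)\beta$. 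For $\Rz_{s=0}\zeta'(s;L,L_0)$ one uses $2(1-\log 2)\bigl(\frac{e_{a,0}}{2}-\frac{c_b}{2}\bigr)=(1-\log 2)(e_{a,0}-c_b)$ and $\bigl(2+\frac{\pi^2}{6}+2(1-\log 2)^2\bigr)\frac14=\frac12+\frac{\pi^2}{24}+\frac{(1-\log 2)^2}{2}$, while $-\beta\,\Rz_{s=-1/2}\zeta(s;A,A_0)$ reproduces verbatim the large parenthesis (sums plus the two integrals) and $-2\log\eta(\beta;A,A_0)$ is carried over unchanged.

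I do not expect a genuine obstacle: the argument is essentially bookkeeping on a Laurent expansion. The points that demand care are the sign of $(\alpha_a+1-s)^{h+1}$ when re-expanding in powers of $s+\tfrac12$ (this is what turns the coefficients $(-1)^{h+1}e_{a,h}/2^{h+1}$ of \eqref{2.5a} into the clean $e_{a,h}/2^{h+1}$), the index conventions in the finite sums together with the convention that if $-\tfrac32$ is not attained by $\{\alpha_j\}$ (resp.\ $\{\beta_j\}$) then the index $a$ (resp.\ $b$) and the associated terms are simply absent, and checking once more that the two correction integrals in \eqref{2.5a} really are regular at $s=-\tfrac12$ so that they pass straight into $\Rz_{s=-1/2}\zeta(s;A,A_0)$ and thence into the final formula for $\Rz_{s=0}\zeta'(s;L,L_0)$.
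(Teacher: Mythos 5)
Your approach is exactly the paper's: the printed proof is the single line ``This is a simple consequence of Lemma \ref{rzf} and Prop.~\ref{p1}'', and your two stages (reading the Laurent coefficients at $s=-\frac12$ off the representation \eqref{2.5a}, then substituting into Proposition \ref{p1}) supply precisely the bookkeeping the paper omits. Your extraction of the three Laurent coefficients at $s=-\frac12$, including the sign flip coming from $(\alpha_a+1-s)^{h+1}=(-1)^{h+1}(s+\frac12)^{h+1}$ and the observation that the correction integrals are regular there, is correct.

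One concrete slip: the combination you assert for $\Rz_{s=0}\zeta(s;L,L_0)$ does not hold as written. With your (correct) inputs $\Ru_{s=-1/2}\zeta(s;A,A_0)=\frac{e_{a,0}-c_b}{2}$ and $\Rd_{s=-1/2}\zeta(s;A,A_0)=\frac{e_{a,1}}{4}$, Proposition \ref{p1} gives
\[
-\beta\Big(\frac{e_{a,0}}{2}-\frac{c_b}{2}\Big)-2\beta(1-\log 2)\frac{e_{a,1}}{4}
=+\frac{1}{2}\big(c_b-e_{a,0}-(1-\log 2)\,e_{a,1}\big)\beta,
\]
i.e.\ the \emph{negative} of the expression printed in the Corollary. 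Since the inputs and Proposition \ref{p1} are both correct, this exposes a sign typo in the Corollary's formula for $\Rz_{s=0}\zeta(s;L,L_0)$ rather than a flaw in your method; the reading is confirmed by the application in Section \ref{s3}, where (with $c_b=0$, $a=3$, $e_{3,1}=-\gamma/\pi$) the stated value $-\frac{8\pi\alpha-2C\gamma+4\gamma+\gamma\log(\gamma^2/4)}{4\pi}\beta+\frac{(1-\log2)\gamma}{2\pi}\beta$ agrees with $+\frac12(c_b-e_{a,0}-(1-\log2)e_{a,1})\beta$ and not with the printed general formula. You should either flag the discrepancy or refrain from claiming that the algebra reproduces the printed sign. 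The remaining identities --- $\Ru_{s=0}\zeta(s;L,L_0)$ and the full expression for $\Rz_{s=0}\zeta'(s;L,L_0)$ --- do check out exactly as you describe.
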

\begin{proof}
This is a simple consequence of Lemma \ref{rzf} and Prop. \ref{p1}.
\end{proof}

\section{Coulomb potential plus delta interaction centered at the origin}
\label{s1}

\subsection{Preliminaries}

Recall that  we denote by $\rho(A)$ the resolvent set of $A$ and
by $R(\lambda;A)$ the resolvent operator $(\lambda I-A)^{-1}$, for
$\lambda\in \rho(A)$. If $R(\la;A) $ operates in $L^2(\RE^3)$, we denote by $k(\lambda; A)=k(\lambda;A)(x,y)$ the integral kernel of $R(\lambda; A)$, $x,y\in\RE^3$.

Let $H_0$ be the self-adjoint realization of the  operator
$-\Delta+\ga/|x|$ in $L^2(\RE^3)$, namely the Laplace operator plus a
Coulomb potential centered at the origin in the three dimensional
Euclidean space, with parameter $\gamma\in\RE$. The kernel of the resolvent of $H_0$ is, see, e.g.,   \cite[eq. (2.1.16)]{AGH-KH05} and  \cite{BGPjmp4605,BS},
\[k(\la;H_0)(x,y)=
-\frac{\Gamma(1+\ga/2\sqrt{-\la})}{4\pi|x-y|} f_{\lambda}(x,y),
\]
where
\begin{align*}
f_{\lambda}(x,y)= & W_{-\frac{\gamma}{2\sqrt{-\lambda}},\frac{1}{2}}(\sqrt{-\la}\ x_+)M'_{-\frac{\gamma}{2\sqrt{-\lambda}},\frac{1}{2}}(\sqrt{-\la}\ x_-) \\
&-W'_{-\frac{\gamma}{2\sqrt{-\lambda}},\frac{1}{2}}(\sqrt{-\la}\ x_+)M_{-\frac{\gamma}{2\sqrt-{\lambda}},\frac{1}{2}}(\sqrt{-\la}\ x_-),
\end{align*}
with $\la\in\rho(H_0)$, $\Re\sqrt{-\la}>0$, $x_\pm=|x|+|y|\pm |x-y|$, and where
$M_{\ka,\mu}$ and $W_{\ka,\mu}$ are  Whittaker functions, see e.g. \cite{GraRyz07}.
In the next proposition we recall some results on the spectrum of $H_0$,
see e.g. \cite{AGH-KH05, Hosjmp867}.
\begin{proposition}
For all $\gamma\in\RE$ the essential spectrum of $H_0$ is purely absolutely continuous, moreover
\[
\sigma_{ess}(H_0)=\sigma_{ac}(H_0)=[0,+\infty)\,.
\]
If $\gamma\geq 0 $ the point spectrum of  $H_0$ is empty. If $\gamma <0$ the point spectrum of $H_0$ is
\[
\sigma_{pp}(H_0)=\Big\{-\frac{\gamma^2}{4(n+1)^2}\Big\}_{n=0}^{\infty}\qquad \gamma<0\,.
\]
\end{proposition}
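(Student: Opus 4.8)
The plan is to read off the spectral information directly from the explicit resolvent kernel and the known theory of the one-dimensional Whittaker (radial Coulomb) operator, since $H_0=-\Delta+\gamma/|x|$ decomposes under the rotation group and only the $s$-wave ($\ell=0$) sector is affected by anything delicate. First I would perform the partial-wave decomposition $L^2(\RE^3)\simeq\bigoplus_{\ell\ge 0}L^2((0,\infty),dr)\otimes\mathcal Y_\ell$, reducing $H_0$ to the direct sum of the half-line operators $h_\ell=-\frac{d^2}{dr^2}+\frac{\ell(\ell+1)}{r^2}+\frac{\gamma}{r}$. For $\ell\ge 1$ the centrifugal barrier forces the limit-point case at $0$ and no boundary condition is needed; for $\ell=0$ one takes the usual (Friedrichs) realization with the boundary condition that picks out the regular solution at the origin — this is exactly the realization whose resolvent kernel is quoted above via $M_{\kappa,1/2}$ (regular at $0$) and $W_{\kappa,1/2}$ (decaying at $\infty$).

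Next I would establish the essential spectrum. The operator $\gamma/|x|$ is a Rollnik/relatively compact perturbation of $-\Delta$ in the resolvent sense (indeed $|x|^{-1}$ is $-\Delta$-compact in three dimensions), so Weyl's theorem gives $\sigma_{ess}(H_0)=\sigma_{ess}(-\Delta)=[0,\infty)$. Absolute continuity on $[0,\infty)$ follows either from the classical limiting absorption principle for Coulomb-type potentials, or, in the spirit of this paper, by noting that the boundary values of the resolvent kernel across the cut $(0,\infty)$ exist and are continuous — the Whittaker functions $W_{\kappa,1/2}(\sqrt{-\lambda}\,x_+)$ and $M_{\kappa,1/2}'(\sqrt{-\lambda}\,x_-)$ depend analytically on $\sqrt{-\lambda}$ away from $\sqrt{-\lambda}=0$ and the prefactor $\Gamma(1+\gamma/2\sqrt{-\lambda})$ is analytic there too, so there are no singular or embedded pieces; hence $\sigma_{ac}(H_0)=\sigma_{ess}(H_0)=[0,\infty)$ and the singular continuous part is empty.

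For the point spectrum, I would argue sector by sector. Negative eigenvalues $\lambda=-\kappa^2<0$ of $h_\ell$ require a solution lying in $L^2$ near both $0$ and $\infty$, i.e. simultaneously proportional to $M_{-\gamma/2\kappa,\ell+1/2}$ and to $W_{-\gamma/2\kappa,\ell+1/2}$; this forces the pole condition in the connection coefficient, which is precisely the pole of $\Gamma(1+\gamma/2\sqrt{-\lambda})$, i.e. $1+\gamma/(2\sqrt{-\lambda})=-n$ for some $n\in\NO_0$ in the $\ell=0$ sector (the higher $\ell$ sectors give the same Rydberg formula with shifted integer). Solving $\sqrt{-\lambda}=-\gamma/(2(n+1))$ — which has a solution with $\sqrt{-\lambda}>0$ only when $\gamma<0$ — yields $\lambda=-\gamma^2/(4(n+1)^2)$, and when $\gamma\ge 0$ the condition cannot be met, so $\sigma_{pp}(H_0)=\varnothing$. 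There are no non-negative eigenvalues because $[0,\infty)$ is purely absolutely continuous by the previous step (the Whittaker solution that is $L^2$ at infinity for $\lambda>0$ is oscillatory, not decaying). Assembling the sectors gives the stated set $\{-\gamma^2/(4(n+1)^2)\}_{n\ge 0}$ for $\gamma<0$ and emptiness for $\gamma\ge 0$.

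The main obstacle is the $\ell=0$ sector and, within it, the behavior at the origin: one must justify that the self-adjoint realization implicit in the quoted resolvent kernel is the physically standard one (regular boundary condition, no extra point interaction — that is added only later to form $A$), and that the eigenvalue condition is exactly the pole structure of $\Gamma(1+\gamma/2\sqrt{-\lambda})$ with no spurious solutions from branch-cut artifacts. Concretely, one should check that for $\gamma<0$ each candidate value $-\gamma^2/(4(n+1)^2)$ genuinely produces an $L^2$ eigenfunction (the confluence $M=W$ up to scalar at these parameters) with the correct multiplicity $(n+1)^2$ from the angular degeneracy, and that the residue of the resolvent at these points is a finite-rank positive operator — all of which is classical for the hydrogen atom and is the content of the cited references \cite{AGH-KH05,Hosjmp867}, so in the write-up I would invoke those rather than redo the Whittaker asymptotics. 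An alternative, fully self-contained route avoiding partial waves is to use the known Rellich–Kato theory: $H_0$ is essentially self-adjoint on $C_c^\infty(\RE^3\setminus\{0\})$'s natural extension, its negative spectrum is discrete and finite below any $-\epsilon$, and a direct separation-of-variables computation of the bound states reproduces the Bohr formula; I would present the partial-wave argument as the cleaner one.
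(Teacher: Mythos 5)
Your argument is sound, but note that the paper does not prove this proposition at all: it is stated as a recollection of known results, with a pointer to \cite{AGH-KH05} and \cite{Hosjmp867}. What you have written is essentially a reconstruction of the proof contained in those references --- partial-wave decomposition into the half-line operators $h_\ell$, Weyl's theorem for $\sigma_{ess}=[0,\infty)$ via relative compactness of $|x|^{-1}$, and identification of the bound states with the poles of $\Gamma(1+\gamma/2\sqrt{-\lambda})$, i.e.\ with the vanishing of the coefficient of the growing Whittaker solution in the connection formula for $M_{\kappa,\mu}$. That is the right route, and your observation that the higher partial waves only reproduce the same Rydberg set $\{-\gamma^2/4m^2\}_{m\geq \ell+1}$ (contributing the angular multiplicity) is exactly what makes the union over $\ell$ collapse to the stated sequence. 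Two points deserve tightening. First, your fallback argument for absolute continuity --- that the boundary values of the resolvent kernel across $(0,\infty)$ exist pointwise and are continuous --- does not by itself control the spectral measure; you should rest this step on the limiting absorption principle for Coulomb-type (long-range) potentials in weighted $L^2$ spaces, or on the explicit Coulomb-wave eigenfunction expansion, as you indicate in your first alternative. Second, be careful with the closing remark: $-\Delta+\gamma/|x|$ is essentially self-adjoint on $C_c^\infty(\RE^3)$ by Kato--Rellich, but \emph{not} on $C_c^\infty(\RE^3\setminus\{0\})$ --- the deficiency indices there are $(1,1)$, and this failure is precisely what produces the family $H_\alpha$ studied in the rest of the paper; as phrased, ``$C_c^\infty(\RE^3\setminus\{0\})$'s natural extension'' is ambiguous enough to read as the false statement.
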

%



Following \cite{AGH-KH05}, we introduced a perturbation of $H_0$, by adding a  singular one center point interaction, also centered at the origin.  For all $-\infty<\al \leq\infty$ we denote by $H_\al$ the operator formally written as $-\Delta+\gamma/|x|+\al\delta_0$. The concrete operator is defined in Theorem 2.1.2 of \cite{AGH-KH05}, and the  integral kernel of the resolvent of $H_\al$ is
\beq
\label{RHal}
\begin{aligned}
k(\la;H_\al)(x,y)=&k(\la;H_0)(x,y)
-
\frac{4\pi}{4\pi\al-\ga F_\gamma(\ga/2\sqrt{-\la})}\gg(\la;x)\gg(\la;y)\,,\\
&\la\in\rho(H_\al)\cap\rho(H_0)\,,\;\Re\sqrt{-\la}>0\,,\; x,y\in\RE^3
\end{aligned}
\eeq
with
\[
\gg(\la;x):=\frac{\Gamma(1+\ga/2\sqrt{-\la})}{4\pi|x|}
\WW_{-\ga/2\sqrt{-\la},1/2}(2\sqrt{-\la} \ |x|)\qquad x\neq0,
\]
and
\[
\begin{aligned}
F_\gamma(z): =&
\begin{cases}
\psi(1+z)-\log(z)-\frac{1}{2z}-\psi(1)-\psi(2),&\gamma>0,\\
\psi(1+z)-\log(-z)-\frac{1}{2z}-\psi(1)-\psi(2),&\gamma<0.
\end{cases}
\end{aligned}
\]
Here $z\in\CO$ and  $\psi$ is the digamma function, i.e., $\psi(z) = \frac{d}{dz} \log \Gamma(z)$, see \cite[8.36]{GraRyz07}. We note that the function $F_\ga(z)$ is indeed a function of $z$ and of
$\sgn(\ga)$ only.

In the following proposition we recall some results on
the spectrum of the operator $H_\alpha$, see, e.g. \cite[Th. 2.1.3]{AGH-KH05}.
\begin{proposition} Let $-\infty<\al \leq\infty$.
For all $\gamma\in\RE$ the essential spectrum of $H_\alpha$ is purely absolutely continuous, moreover
\[
\sigma_{ess}(H_\alpha)=\sigma_{ac}(H_\alpha)=[0,+\infty)\,.
\]
The eigenvalues of $H_\alpha$ associated with the $s$-wave $(l=0)$ are given by the solutions of the equation
\beq
\label{eigeneqganeg}
4\pi\al-\ga F_\ga(\gamma/2\sqrt{-E})=0,\qquad \;E<0\,,
\eeq
where we set $\ga F_\ga(\gamma/2\sqrt{-E})\big|_{\gamma=0}=
\lim_{\ga\to0}\ga F_\ga(\gamma/2\sqrt{-E})=-\sqrt{-E}$.

 If $\gamma\geq0$ and $\alpha\geq
 -\gamma[\psi(1)+\psi(2)]/4\pi$, the  equation \eqref{eigeneqganeg} has
 no solutions, moreover the point spectrum of $H_\alpha$ is empty.

If $\gamma\geq0$  and  $\alpha<-\gamma[\psi(1)+\psi(2)]/4\pi$, the equation
 \eqref{eigeneqganeg} has precisely one solution, and the operator   $H_\alpha$ has
 precisely one negative eigenvalue.

If $\gamma<0$ the equation \eqref{eigeneqganeg} has infinitely many
 solutions. Correspondingly there are infinitely many simple eigenvalues associated
 with the $s$-wave $(l=0)$, moreover for $l\geq 1$ the eigenvalues of $H_\alpha$ are given by the usual Coulomb levels $E_m=-\gamma^2/4m^2$, $m\in \NO$, $m\geq 2$.
\end{proposition}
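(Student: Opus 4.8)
The plan is to base everything on the rank-one structure of the resolvent difference in \eqref{RHal}, which confines the effect of the point interaction to the $l=0$ partial wave, and then to reduce the bound-state problem to the transcendental equation \eqref{eigeneqganeg}, counted by a monotonicity analysis of the digamma function. For the first assertion I would argue as follows: by \eqref{RHal}, $R(\la;H_\al)-R(\la;H_0)$ has rank one for every $\la\in\rho(H_\al)\cap\rho(H_0)$, hence is compact, so Weyl's theorem and the preceding proposition on $H_0$ give $\sigma_{ess}(H_\al)=\sigma_{ess}(H_0)=[0,+\infty)$, while the Kato--Rosenblum (Birman--Kuroda) theorem gives $\sigma_{ac}(H_\al)=\sigma_{ac}(H_0)=[0,+\infty)$. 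To upgrade this to the statement that the spectrum on $[0,\infty)$ is purely absolutely continuous I would use the angular momentum decomposition: since the term $\gg(\la;x)\gg(\la;y)$ in \eqref{RHal} is spherically symmetric, $H_\al$ and $H_0$ agree in each sector $l\ge1$, and in the sector $l=0$ the operator $H_\al$ is merely a self-adjoint extension, at the regular endpoint $r=0$, of the half-line Coulomb operator $-d^2/dr^2+\ga/r$, which has purely absolutely continuous spectrum on $[0,\infty)$; a finite-rank change of boundary condition cannot create singular continuous spectrum.

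For the negative eigenvalues, $E<0$ lies in $\sigma_{pp}(H_\al)$ exactly when $\la\mapsto k(\la;H_\al)(x,y)$ has a pole at $\la=E$, and \eqref{RHal} leaves only two mechanisms. If $4\pi\al-\ga F_\ga(\ga/2\sqrt{-\la})$ has a (simple, by the monotonicity below) zero at $\la=E$, the residue of $k(\cdot;H_\al)$ there is a nonzero multiple of $\gg(E;x)\gg(E;y)$; since $\gg(E;\cdot)$ equals, up to a constant, $|x|^{-1}W_{-\ga/2\sqrt{-E},1/2}(2\sqrt{-E}\,|x|)$ --- a nonzero element of $L^2(\RE^3)$ (square-integrable $|x|^{-1}$ singularity at $0$, exponential decay at $\infty$) --- the pole is genuine, $E$ is a simple eigenvalue with eigenfunction $\gg(E;\cdot)$, and the condition is precisely \eqref{eigeneqganeg}, with the $\ga=0$ value fixed by continuity as stated. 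The only other possibility is a pole of $k(\la;H_0)$ itself, i.e.\ a Coulomb level $E=-\ga^2/4m^2$ (present only when $\ga<0$): its $l\ge1$ Coulomb eigenfunctions vanish at the origin and are thus untouched by the point interaction, whereas its $l=0$ component is cancelled by a matching pole of the second term of \eqref{RHal} (the poles of $\Gamma(1+\ga/2\sqrt{-\la})$ and of $F_\ga$ there conspire, the square of the $\Gamma$-factor in $\gg\otimes\gg$ against the simple zero of $1/(4\pi\al-\ga F_\ga)$). Hence $-\ga^2/4m^2$ remains an eigenvalue of $H_\al$ with the multiplicity of the $l\ge1$ Coulomb states at that level, which is nonzero only for $m\ge2$.

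It then remains to count the solutions of \eqref{eigeneqganeg}. With $z=\ga/2\sqrt{-E}$ and $\psi(1+z)=\psi(z)+1/z$ one gets $\frac{d}{dz}F_\ga(z)=\psi'(1+z)-\frac1z+\frac1{2z^2}$. If $\ga>0$ then $z$ increases over $(0,\infty)$ as $E$ increases over $(-\infty,0)$; using $\psi'(1+z)=\psi'(z)-1/z^2$ the derivative becomes $\psi'(z)-\frac1z-\frac1{2z^2}$, which is positive by the standard trigamma bound $\psi'(z)>\frac1z+\frac1{2z^2}$ (equivalently $\frac t2\coth\frac t2>1$ in the integral representation of $\psi'$), while $\psi(z)=\log z-\frac1{2z}+O(z^{-2})$ gives $F_\ga(z)\to-\infty$ as $z\to0^+$ and $F_\ga(z)\to-\psi(1)-\psi(2)$ as $z\to\infty$. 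Thus $\ga F_\ga$ increases strictly from $-\infty$ to $-\ga[\psi(1)+\psi(2)]$, so \eqref{eigeneqganeg} has exactly one solution when $\al<-\ga[\psi(1)+\psi(2)]/4\pi$ and none otherwise, which --- together with the absence of Coulomb levels for $\ga\ge0$ --- settles the first two cases. If $\ga<0$ then $z$ decreases over $(-\infty,0)$, $\psi(1+z)$ has simple poles exactly at $z=-1,-2,\dots$ (the points $E=-\ga^2/4m^2$), and on each component $(-(m+1),-m)$, $m\in\NO_0$, the derivative $\psi'(1+z)-\frac1z+\frac1{2z^2}$ is manifestly positive (all three summands are positive for $z<0$) while $F_\ga$ runs from $-\infty$ to $+\infty$; hence \eqref{eigeneqganeg} has exactly one solution in each such component, giving infinitely many simple $s$-wave eigenvalues --- one below the Coulomb ground state and one between each pair of consecutive Coulomb $s$-levels --- which with the surviving $l\ge1$ Coulomb levels $E_m=-\ga^2/4m^2$, $m\ge2$, is the last case. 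The hard part is precisely this digamma analysis --- the strict monotonicity, the exact limiting values of $\ga F_\ga$, and the control of $\psi(1+z)$ between its poles needed to locate the new eigenvalues relative to the Coulomb spectrum; the rest (rank-one structure, Weyl's theorem, the residue computations) is routine once \eqref{RHal} is in hand.
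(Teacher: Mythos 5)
First, note that the paper itself does not prove this proposition: it is recalled from \cite[Th.\ 2.1.3]{AGH-KH05}, so there is no internal proof to compare against. Your overall strategy --- rank-one resolvent difference confined to the $l=0$ channel, eigenvalues from the poles of the Krein formula \eqref{RHal}, and a monotonicity analysis of $F_\ga$ to count solutions of \eqref{eigeneqganeg} --- is essentially the standard route taken in that reference, and the parts that carry the real content are correct: the trigamma bound $\psi'(z)>\frac1z+\frac1{2z^2}$ (via $\frac t2\coth\frac t2>1$), the limits $F_\ga(0^+)=-\infty$, $F_\ga(+\infty)=-\psi(1)-\psi(2)$, the sign analysis of $F_\ga'$ on the components $(-(m+1),-m)$ for $\ga<0$, and the observation that $\gg(E;\cdot)\in L^2(\RE^3)$ so that each simple zero of $4\pi\al-\ga F_\ga$ produces a genuine simple eigenvalue. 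The identification of the surviving Coulomb levels with the $l\ge1$ states is also correct, although the asserted cancellation of the $s$-wave pole of $k(\la;H_0)$ at $z=-m$ against the double pole of $\Gamma(1+z)^2$ and the zero of $(4\pi\al-\ga F_\ga)^{-1}$ is stated as a mechanism rather than verified by an actual residue computation; that computation should be carried out (or cited) to confirm that the residues match exactly rather than leaving a spurious $l=0$ contribution.

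The one genuine gap is in the first paragraph: the principle ``a finite-rank change of boundary condition cannot create singular continuous spectrum'' is false in general. Changing the boundary condition at $r=0$ of a half-line Schr\"odinger operator is a rank-one perturbation in the resolvent sense, and by the Aronszajn--Donoghue/Simon--Wolff theory such perturbations can create (or destroy) singular continuous spectrum; there are explicit potentials (e.g.\ Pearson's sparse potentials) for which almost every boundary condition yields purely singular continuous spectrum. Weyl plus Kato--Rosenblum gives you $\sigma_{ess}(H_\al)=\sigma_{ac}(H_\al)=[0,+\infty)$, but the statement that the essential spectrum is \emph{purely} absolutely continuous needs an argument specific to the Coulomb potential: for instance, for $E>0$ every solution of $-u''+(\ga/r)u=Eu$ on the half-line is bounded and non-subordinate (distorted plane waves), so Gilbert--Pearson subordinacy theory excludes a singular component on $(0,\infty)$ for \emph{every} boundary condition; alternatively one reads this off the explicit Weyl $m$-function, whose boundary values on $(0,\infty)$ have bounded, strictly positive imaginary part. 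With that replacement the argument is complete.
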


Because of the results of the previous proposition, we proceed our analysis only in the case of repulsive Coulomb potential, namely for $\gamma\geq 0$.

\subsection{Trace of the relative resolvent}
We first note that for any $\la\in
\rho(H_0)\cap\rho(H_\al)$  the difference $\tr(R(\la;H_\al)-R(\la;H_0))$ is a
rank one operator (see, e.g. \cite{AGH-KH05}), then  the trace  of the relative resolvent of the pair $(H_\al,H_0)$ is well defined by
\[
r(\la;H_\al,H_0)=\tr(R(\la;H_\al)-R(\la;H_0))\,.
\]

By equation  (\ref{RHal}) and by the definition of $\gg(\la,x)$ it
follows that
\[
\begin{aligned}
r(\la;H_\al,H_0)=
-\frac{\Gamma(1+\ga/2\sqrt{-\la})^2}{4\pi\al-\ga F_\ga(\ga/2\sqrt{-\la})}
\int_0^\infty
\WW_{-\ga/2\sqrt{-\la},1/2}^2(2\sqrt{-\la} \ |x|)\,
 d|x|
\end{aligned}
\]with $\Re\sqrt{-\lambda}>0$.

From  \cite[$7.625.4$ and $9.302.1$]{GraRyz07}  and by using the
identities  $\Gamma(1+z)=z\Gamma(z)$ and
$\Gamma(1-z)\Gamma(z)=\frac{\pi}{\sin(\pi z)}$, $z\in\CO\backslash\ZA$,  we get for the integral
in the latter equation the expression
\[
\begin{aligned}
&\int_0^\infty\WW_{-\ga/2\sqrt{-\la},1/2}^2(2\sqrt{-\la} \ |x|)\,d|x| \\
=&
\frac{1}{2\sqrt{-\la}} \frac{1}{\Gamma(1+\ga/2\sqrt{-\la})
\Gamma(\gamma/2\sqrt{-\la})}\\
&\times
\frac{1}{2\pi i}\int_L
\frac{(1-s)s}{(s+\ga/2\sqrt{-\la})(s-1+\ga/2\sqrt{-\la})}\frac{\pi^2}{\sin^2(\pi s)} ds,
\end{aligned}
\]
where $L$ is a path  in $\CO$ from $-\infty$ to $+\infty$ such that the set
$\{1,2,3,...\}$ is on the right of $L$ and the set
$\{0,-1,-2,...,1-\ga/2\sqrt{-\la},
-\ga/2\sqrt{-\la}, -1-\ga/2\sqrt{-\la}, ...  \}$  is on the left of
$L$. We notice that for $\ga>0$ one can choose $L=\{s=x_0+iy\;
\textrm{with}\;1-\ga\Re\sqrt{-\la}/2|\la|<x_0<1\,,\;-\infty<y<\infty\}$.

This gives
\beq
\label{reltr}
r(\la;H_\al,H_0)=-
\frac{1}{4\pi\al-\ga F_\ga(\ga/2\sqrt{-\la})}
\frac{1}{2\sqrt{-\la}} I(\ga/2\sqrt{-\la}),
\eeq
where
\beq
\label{Izeta}
I( z )=\frac{ z }{2\pi i}
\int_L
\frac{(1-s)s}{(s+ z )(s-1+ z )}\frac{\pi^2}{\sin^2(\pi s)} ds,
\eeq
and we used again the identity $\Gamma(1+z)=z\Gamma(z)$, $z\in\CO$. In order to analyze the function $I(z)$ appearing in the formula for the relative trace of the resolvent  we need the formulas in the following lemma.
\begin{lemma}
\label{prop-ints}
Let $L$ be the path $L=\{z=x_0+iy|\,1-a<x_0<1\,,\;-\infty<y<\infty\}$, with $\Re(a)>0$, then
\[
\frac{1}{2\pi i}\int_{L}\frac{\pi^2}{\sin^2\pi z}dz=1\,,
\]
and 
\[
\frac{1}{2\pi i}\int_{L}\frac{1}{z+a}\frac{\pi^2}{\sin^2\pi z}dz=\psi'(a)-\frac{1}{a^2}.
\]
\end{lemma}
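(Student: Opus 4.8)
The plan is to evaluate both integrals by deforming the contour $L$ to the right and collecting residues, using throughout that $\dfrac{\pi^2}{\sin^2\pi z}=-\pi\,\dfrac{d}{dz}\cot\pi z$, the elementary limits $\cot\pi z\to\mp i$ as $\Im z\to\pm\infty$ (read off from $\cot\pi z=i\,\frac{e^{i\pi z}+e^{-i\pi z}}{e^{i\pi z}-e^{-i\pi z}}$), and the identity $|\sin(x_0+iy)|^2=\sin^2\pi x_0+\sinh^2\pi y$, so that the first integrand decays like $\pi^2/\sinh^2\pi y$ along $L$ and $L$ contains no integer.

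For the first identity, $-\pi\cot\pi z$ is a primitive of $\pi^2/\sin^2\pi z$ holomorphic in a neighbourhood of $L$, so, the improper integral being convergent by the decay just noted,
\[
\int_L\frac{\pi^2}{\sin^2\pi z}\,dz=\Big[-\pi\cot\pi z\Big]_{x_0-i\infty}^{x_0+i\infty}=-\pi(-i)-\big(-\pi(i)\big)=2\pi i,
\]
which after division by $2\pi i$ gives $1$. (Equivalently: all residues of $\pi^2/\sin^2\pi z$ at the integers vanish, so $L$ may be shifted to $\Re z=N+\tfrac12$, where $\sin\pi z=\pm\cosh\pi y$ and the integral equals $i\int_{-\infty}^\infty\pi^2\cosh^{-2}\pi y\,dy=2\pi i$, independently of $N$.)

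For the second identity I would close $L$ to the right with the rectangle having vertical sides $\Re z=x_0$, $\Re z=N+\tfrac12$ and horizontal sides $\Im z=\pm Y$. Sending $Y\to\infty$ kills the horizontal sides since $|\sin\pi z|^2\ge\sinh^2\pi Y$ there, and sending $N\to\infty$ kills the right vertical side since on it $|\sin\pi z|\ge1$ while $|z+a|\ge N+\tfrac12+\Re a$, so that contribution is $O(1/N)$ — here the factor $1/(z+a)$, absent in the first integral, is exactly what produces decay to the right. By the hypothesis $\Re a>0$ the poles $-a$ and $0,-1,-2,\dots$ lie to the left of $L$ and $1,2,3,\dots$ to its right, so only the double poles of $\pi^2/\sin^2\pi z$ at $z=n\ge1$ are crossed. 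Writing $\pi^2/\sin^2\pi z=g_n(z)/(z-n)^2$ with $g_n$ holomorphic near $n$, $g_n(n)=1$ and $g_n'(n)=0$ (evenness about $z=n$), one gets $\operatorname{Res}_{z=n}\!\big[\frac{1}{z+a}\frac{\pi^2}{\sin^2\pi z}\big]=\frac{d}{dz}\big[\frac{g_n(z)}{z+a}\big]_{z=n}=-\frac{1}{(n+a)^2}$; since $L$ is oriented upward the rectangle is traversed clockwise, so
\[
\int_L\frac{1}{z+a}\frac{\pi^2}{\sin^2\pi z}\,dz=-2\pi i\sum_{n=1}^\infty\Big(-\frac{1}{(n+a)^2}\Big)=2\pi i\sum_{n=1}^\infty\frac{1}{(n+a)^2}.
\]
It then remains to use the trigamma series $\psi'(a)=\sum_{n=0}^\infty(n+a)^{-2}$ to rewrite $\sum_{n\ge1}(n+a)^{-2}=\psi'(a)-a^{-2}$, which yields the claim after dividing by $2\pi i$.

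The only genuine subtlety, worth flagging, is the non-decay of $\pi^2/\sin^2\pi z$ as $\Re z\to+\infty$: the usual ``sum a series by enclosing its poles'' scheme does not close for the first integral (the right-hand vertical segment contributes the $N$-independent constant $2\pi i$), which is why the primitive $-\pi\cot\pi z$ is the cleanest route there; once the convergence factor $1/(z+a)$ is present in the second integral the residue computation is routine, and what is left is bookkeeping — orientations, the Laurent expansion $\pi^2/\sin^2\pi z=(z-n)^{-2}+\tfrac{\pi^2}{3}+O((z-n)^2)$, and the standard trigamma series.
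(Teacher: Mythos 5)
Your proof is correct. For the first integral you do exactly what the paper does: integrate the primitive $-\pi\cot\pi z$ along $L$ and use $\cot\pi z\to\mp i$ as $\Im z\to\pm\infty$. For the second integral, however, you take a genuinely different route. The paper integrates by parts twice to reach $-2\int_L(a+z)^{-3}\log\sin\pi z\,dz$, expands $\log\sin\pi z$ via the Weierstrass product, discards the $\log\pi z$ piece, and evaluates each term $\int_L\frac{z}{(a+z)^2}\frac{dz}{k^2-z^2}$ by deforming $L$ to a small loop around the simple pole at $z=k$; you instead close the original contour to the right and sum the residues $-\frac{1}{(n+a)^2}$ of the double poles at $z=n\ge 1$ directly. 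Both methods land on $\sum_{n\ge1}(n+a)^{-2}=\psi'(a)-a^{-2}$. What the paper's detour buys is that it only ever meets simple poles; the price is justifying termwise integration of the product expansion and the vanishing of the $\log\pi z$ contribution. Your route is shorter and more self-contained, at the cost of the double-pole residue computation (which you handle correctly via $g_n(n)=1$, $g_n'(n)=0$) and the estimate killing the segment at $\Re z=N+\tfrac12$ (which you supply, correctly identifying the factor $1/(z+a)$ as what makes the contour close — your remark that this fails for the first integral is exactly the right point to flag). Incidentally, your direct bookkeeping of the clockwise orientation also sidesteps a sign slip in the paper's printed intermediate step, where $\int_L\frac{z}{(a+z)^2}\frac{dz}{k^2-z^2}$ is recorded as $-\frac{\pi i}{(a+k)^2}$ although the upward orientation of $L$ forces $+\frac{\pi i}{(a+k)^2}$ (the lemma's final formula is nonetheless correct, as your computation confirms).
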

\begin{proof} For the first, we  just integrate
\begin{align*}
&\frac{1}{2\pi i}\int_{L}\frac{\pi^2}{\sin^2\pi z}dz=-\frac{1}{2i} \left[ \cot(\pi z)\right]_{x_0-i y}^{x_0+i y} \\
=&
\lim_{y\to \infty}-\frac{1}{2}\left( \frac{\e^{ix_0- y}+\e^{-ix_0+ y}}{\e^{ix_0- y}-\e^{-ix_0+ y}}-\frac{\e^{ix_0+ y}+\e^{-ix_0-iy}}{\e^{ix_0+ y}-\e^{-ix_0- y}}\right)=1.
\end{align*}

For the second one, we first integrate twice by parts. This gives
\[
\int_{L}\frac{1}{a+z}\frac{\pi^2}{\sin^2\pi z}dz=-2\int_L \frac{\log\sin \pi z}{(a+z)^3} dz.
\]

Next,we  use the product representation for the sine function:
\[
\int_L \frac{\log\sin \pi z}{(a+z)^3}dz=\int_L \frac{\log\pi
 z}{(a+z)^3}dz +
\int_L \frac{1}{(a+z)^3} \sum_{k=1}^\infty \log\left(1-\frac{z^2}{k^2}\right) dz.
\]

The first term gives no contribution, for
\begin{align*}
\int_L \frac{\log\pi z}{(a+z)^3}dz&=-\frac{1}{2}\left[ \frac{\log\pi z}{(a+x_0+iy)^2}\right]_{y=-\infty}^{y=+\infty} +\frac{1}{2}\int_L \frac{1}{z(a+z)^2} dz\\
&=0+\frac{1}{2}\left[\frac{1}{a(a+z)}-\frac{1}{a^2} \log\left(1+\frac{a}{z}\right)\right]_{y=-\infty}^{y=+\infty}=0.
\end{align*}

In the second term, due to uniform convergence, we can twist the sum with the integration. We have
\[
\int_L \frac{1}{(a+z)^3} \log\left(1-\frac{z^2}{k^2}\right) dz=0-\int_L \frac{z}{(a+z)^2}\frac{1}{k^2-z^2} dz.
\]
Assuming $\Re(a)>0$, we can deform the path $L$ to a contour of Hankel type: starting at infinity on the upper side of the real axis, turning around the point $z=k$ and going back to infinity below the real axis. Since the integrand  vanishes as $z^{-3}$ for large $\Re(z)$, we can further deform the path of integration to a circle around the point $z=k$. This gives
\[
\int_L \frac{z}{(a+z)^2}\frac{1}{k^2-z^2} dz=-\frac{\pi i}{(a+k)^2},
\]
and hence the second formula of the lemma follows recalling the definition of the digamma function $\psi(z)$, see \cite[8.36]{GraRyz07}.
\end{proof}
Now we can use the result of the latter lemma to give an explicit
expression for the function $I(z)$.
\begin{lemma}
\label{prop-I}
Let  $I( z )$ be the function defined in equation \eqref{Izeta}, then
\[
I( z )=1-2 z +2\psi'(1+ z ) z ^2\,.
\]
\end{lemma}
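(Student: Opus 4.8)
The goal is to evaluate the contour integral
\[
I(z)=\frac{z}{2\pi i}\int_L \frac{(1-s)s}{(s+z)(s-1+z)}\,\frac{\pi^2}{\sin^2\pi s}\,ds
\]
in closed form, and the natural strategy is to reduce the rational factor to a linear combination of the two elementary kernels whose $L$-integrals are supplied by Lemma \ref{prop-ints}. First I would perform a partial-fraction decomposition of
\[
\frac{(1-s)s}{(s+z)(s-1+z)}
\]
in the variable $s$. Since the numerator is quadratic and the denominator is quadratic, this produces a constant term plus a piece of the form $\frac{P}{s+z}+\frac{Q}{s-1+z}$ for suitable coefficients $P,Q$ depending on $z$; a quick computation of residues at $s=-z$ and $s=1-z$ will give $P$ and $Q$, and matching the leading behaviour as $s\to\infty$ will give the constant. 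I expect the constant to be $-1$ (the leading term of the numerator is $-s^2$ against $s^2$ in the denominator), so that
\[
\frac{(1-s)s}{(s+z)(s-1+z)} = -1 + \frac{P}{s+z} + \frac{Q}{s-1+z}.
\]

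Next I would apply Lemma \ref{prop-ints} term by term. The constant $-1$ contributes $-1$ times $\frac{1}{2\pi i}\int_L \frac{\pi^2}{\sin^2\pi s}\,ds = -1$. For the two simple-pole terms I would rewrite $s-1+z = s+(z-1)$ and invoke the second identity of Lemma \ref{prop-ints} with $a=z$ and with $a=z-1$ respectively, obtaining $\psi'(z)-z^{-2}$ and $\psi'(z-1)-(z-1)^{-2}$ (after checking that the hypothesis $\Re(a)>0$, i.e.\ the geometry of the path $L$ relative to the poles, is consistent with the path described before \eqref{Izeta}; for $\gamma>0$ the excerpt already records the admissible choice of $L$, and the general case follows by analytic continuation in $z$). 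Assembling the pieces and multiplying by the prefactor $z$ yields $I(z)$ as an explicit combination of $z$, $z^{-1}$, $\psi'(z)$ and $\psi'(z-1)$.

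The remaining work is to simplify this expression down to $1-2z+2\psi'(1+z)z^2$. Here I would use the recurrence relation for the trigamma function, $\psi'(z)=\psi'(1+z)+z^{-2}$ (equivalently $\psi'(z-1)=\psi'(z)+(z-1)^{-2}=\psi'(1+z)+z^{-2}+(z-1)^{-2}$), to collect all the trigamma terms into a single multiple of $\psi'(1+z)$ and to cancel the spurious $z^{-2}$ and $(z-1)^{-2}$ singularities against the $-z^{-2}$ and $-(z-1)^{-2}$ coming from Lemma \ref{prop-ints}. After these cancellations the rational remainder should collapse, upon multiplication by $z$, to the polynomial $1-2z$, while the trigamma contribution, carrying the coefficient $z^2$, should become exactly $2z^2\psi'(1+z)$.

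The main obstacle I anticipate is purely bookkeeping: getting the partial-fraction coefficients $P$ and $Q$ right (they involve $z$ in both numerator and denominator and it is easy to slip a sign), and then correctly tracking the several $z^{-2}$- and $(z-1)^{-2}$-type terms through the trigamma recurrences so that all the apparent poles at $z=0$ and $z=1$ cancel, leaving the manifestly entire-looking answer. There is also a minor analyticity point—justifying that the path $L$ can be chosen as in Lemma \ref{prop-ints} for the relevant range of $z$, or else that both sides are analytic in $z$ on a suitable domain and agree there, so the identity extends by continuation—but this is routine given the discussion preceding \eqref{Izeta}.
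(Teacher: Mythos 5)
Your strategy coincides with the paper's own proof: the authors use exactly the partial-fraction decomposition $\frac{(1-s)s}{(s+z)(s-1+z)}=-1+\frac{z(1+z)}{s+z}+\frac{z(1-z)}{s-1+z}$, apply Lemma \ref{prop-ints} term by term (with $a=z$ and $a=z-1$), and simplify via the recurrence $\psi(1+z)=\psi(z)+\tfrac1z$, i.e.\ $\psi'(z)-z^{-2}=\psi'(1+z)$, which makes the spurious poles cancel just as you anticipate. The proposal is correct and essentially identical to the paper's argument.
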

\begin{proof} We observe that
\[
\frac{(1-s)s}{(s+ z )(s-1+ z )}=-1+\frac{ z (1+ z )}{s+ z }+\frac{ z (1- z )}{s-1+ z }\,,
\]
from which it follows that
\[
\begin{aligned}
\frac{ z }{2\pi i}
\int_{L}\frac{(1-s)s}{(s+ z )(s-1+ z )}\frac{\pi^2}{\sin^2\pi s} ds
=&-
\frac{ z }{2\pi i}
\int_{L}\frac{\pi^2}{\sin^2\pi s} ds\\
&+\frac{ z ^2(1+ z )}{2\pi i}
\int_{L}
\frac{1}{s+ z }
\frac{\pi^2}{\sin^2\pi s} ds\\
&+\frac{ z ^2(1- z )}{2\pi i}
\int_{L}
\frac{1}{s-1+ z }
\frac{\pi^2}{\sin^2\pi s} ds.
\end{aligned}
\]

Using Lemma \ref{prop-ints}, and recalling that
 $\psi(z+1)=\psi(z)+\frac{1}{z}$, after some calculation we have the
 stated formula.
\end{proof}


\begin{proposition}
\label{p.rel-tr}
For any $\la\in\rho(H_\al)\cap\rho(H_0)$, the trace of the relative resolvent of  the pair of operators $(H_\alpha,H_0 )$ is given
 by
\beq
\label{r-cou}
\begin{aligned}
 r(\lambda;H_\al,H_0)=
&-\frac{z I(z)}{\ga(4\pi\al-\ga F_\ga(z))}\bigg|_{z=\frac{\ga}{2\sqrt{-\la}}}\\
 =&-\frac{z(2\psi'(1+z)z^2-2z+1)}{\ga(4\pi\al-\ga(\psi(1+z)-\log z -\frac{1}{2z}-\psi(1)-\psi(2)))}
\bigg|_{z=\frac{\ga}{2\sqrt{-\la}}}\,,
\end{aligned}
\eeq
with $\Re\sqrt{-\la}>0$. Moreover the following asymptotic expansion
 holds true for small $\lambda$,
\beq
\label{smalllambda}
r(\lambda;H_\al,H_0)=\sum_{k=0}^\infty b_k (-\lambda)^{k},
\eeq
with $b_k\in\RE$. The   first coefficients are given by
\[
b_0=-\frac{1}{3\ga (\ga-2 C\ga +4\pi \al)}, \hspace{30pt}
b_1=\frac{(17-24 C)\ga+48\pi\al}{45\ga^3(2C\ga-\ga -4\pi\al)^2}.
\]
For large $\lambda$, we have the following asymptotic expansion
\beq
\label{largelambda}
r(\lambda;H_\al,H_0)=\sum_{j=2,k=0}^\infty a_{j,k} (-\lambda)^{-\frac{j}{2}}\log^k (-\la),
\eeq
with $a_{j,k}\in\RE$. The  first coefficients are given by
\[
a_{2,0}=-\frac{1}{2}, \hspace{10pt} a_{2,k>0}=0\]
\[ a_{3,0}=\frac{4\pi\al+(2 -C)\ga+\ga(\log \ga-\log 2)}{2},\hspace{10pt} a_{3,1}=-\frac{\ga}{4}, \hspace{10pt} a_{3,k>1}=0.
\]
\end{proposition}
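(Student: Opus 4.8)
The plan is to extract all three assertions directly from the closed form of the relative trace. Formula \eqref{r-cou} itself costs nothing: in \eqref{reltr} set $z=\ga/2\sqrt{-\la}$, note that then $1/2\sqrt{-\la}=z/\ga$, and substitute the value $I(z)=1-2z+2\psi'(1+z)z^2$ supplied by Lemma \ref{prop-I} together with the $\ga>0$ expression for $F_\ga$. The two asymptotic expansions are then pure local analysis of the right-hand side of \eqref{r-cou} as a function of $z$; throughout I would take $\ga>0$ and work in the non-resonant case, i.e.\ assume that $4\pi\al-\ga F_\ga$ does not vanish in the limit under consideration.

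For \eqref{smalllambda} the relevant regime is $z=\ga/2\sqrt{-\la}\to\infty$, and I would insert the Stirling-type asymptotic expansions of $\psi(1+z)$ and $\psi'(1+z)$ at infinity. The decisive observation is that the terms $-\log z-\tfrac1{2z}$ occurring in $F_\ga$ exactly cancel the part $\log z+\tfrac1{2z}$ of the large-$z$ expansion of $\psi(1+z)$ that is not a power series in $z^{-2}$, so $F_\ga(z)$ acquires an asymptotic expansion in powers of $z^{-2}$ with constant term $-\psi(1)-\psi(2)$; likewise the first two terms $1-2z$ of $I(z)$ are cancelled by the leading part $2z-1+O(z^{-1})$ of $2\psi'(1+z)z^2$, so $zI(z)$ has an asymptotic expansion in powers of $z^{-2}$ beginning $\tfrac13-\tfrac1{15}z^{-2}+\cdots$. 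Dividing these two series gives $r(\la;H_\al,H_0)$ an asymptotic expansion in powers of $z^{-2}=4(-\la)/\ga^2$, which is precisely \eqref{smalllambda}; the coefficients are real because everything entering them — digamma values at integers, Bernoulli numbers, $\ga$, $\al$, the Euler--Mascheroni constant — is real. Then $b_0$ is minus the ratio of the leading coefficients, and $b_1$ comes from carrying one further order in numerator and denominator; both reduce to the stated expressions.

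For \eqref{largelambda} the regime is $z\to0$. There $I(z)$ is analytic at $z=0$, while the Taylor expansion of $\psi(1+z)$ at $0$ gives $F_\ga(z)=-\tfrac1{2z}-\log z+(C-1)+(\text{power series in }z)$, so that $\ga(4\pi\al-\ga F_\ga(z))=\frac{\ga^2}{2z}\bigl(1+g(z)\bigr)$ with $g(z)=2z\log z+z\phi(z)$, $\phi$ analytic at $0$ and $g(z)\to0$. Hence, for small $|z|$,
\[
r(\la;H_\al,H_0)=-\frac{2z^2 I(z)}{\ga^2}\sum_{m\ge0}(-g(z))^m,
\]
and since $g(z)^m$ equals $z^m$ times a polynomial in $\log z$ of degree at most $m$ with coefficients analytic at $0$, collecting powers of $z$ yields $r=\sum_{n\ge2}z^n\sum_{k=0}^{n-2}\widetilde a_{n,k}\log^k z$ with real $\widetilde a_{n,k}$. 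Substituting $z=\ga/2\sqrt{-\la}$ (so that $z^n=(\ga/2)^n(-\la)^{-n/2}$ and $\log z=\log(\ga/2)-\tfrac12\log(-\la)$) and expanding the powers of $\log z$ produces \eqref{largelambda}, with real $a_{j,k}$ that vanish for $k>j-2$; in particular $a_{2,k>0}=a_{3,k>1}=0$. The first coefficients fall out of the leading terms: $-\frac{2z^2}{\ga^2}I(0)=-\tfrac12(-\la)^{-1}$ gives $a_{2,0}=-\tfrac12$, and the order-$z^3$ term, obtained from $-\frac{2z^2}{\ga^2}\bigl(I_1z-2z\log z-\phi(0)z\bigr)$ with $I_1=-2$ and $\phi(0)=\frac{8\pi\al}{\ga}-2C+2$, yields $a_{3,0}$ and $a_{3,1}$.

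The whole argument is routine; the points where I expect to have to be careful are the two cancellations in the $z\to\infty$ analysis, which are what force the small-$\la$ expansion to run over integer powers of $(-\la)$ with no logarithmic terms, and the bookkeeping of the powers of $\log z$ (equivalently $\log(-\la)$) in the $z\to0$ analysis.
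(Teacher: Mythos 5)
Your proposal is correct and follows essentially the same route as the paper: formula \eqref{r-cou} is obtained by substituting $z=\ga/2\sqrt{-\la}$ into \eqref{reltr} and invoking Lemma \ref{prop-I}, and both asymptotic expansions come from the classical expansions of $\psi$ and $\psi'$ at infinity (for small $\la$) and at the origin (for large $\la$). The paper states this tersely, whereas you make explicit the two cancellations that force the small-$\la$ expansion into integer powers of $-\la$ and the geometric-series bookkeeping of $\log z$ for small $z$; your computed coefficients $b_0,b_1,a_{2,0},a_{3,0},a_{3,1}$ all agree with the statement.
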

\begin{proof}
Formula \eqref{r-cou} follows directly from the equation \eqref{reltr}
 and from Lemma \ref{prop-I}. The  asymptotic expansions of the relative trace follow easily from classical expansion of the poly Gamma function. Recalling the expansions of the digamma function (see for example \cite[8.344]{GraRyz07}) for large $|z|$, $z\in \CO$
\[
\psi(1+z)=\log z +\frac{1}{2}\frac{1}{z}-\sum_{k=1}^\infty \frac{B_{2k}}{2k}\frac{1}{z^{2k}},
\]
$B_{2k}$ being Bernoulli numbers,
and for small $z\in\CO$ (see for example \cite[8.342]{GraRyz07}):
\[
\psi(1+z)=-C+\sum_{k=2}^\infty (-1)^k \zeta(k) z^{k-1},
\]
where $C$ is the Euler constant, and where $\zeta$ denotes the Riemann's zeta function. Since
 $z=\frac{\gamma}{2\sqrt{-\lambda}}$ one obtains the expansions \eqref{smalllambda}
 and \eqref{largelambda}.
\end{proof}

\section{The relative partition function of the Coulomb plus delta interaction}
\label{s3}

In this section we study the relative zeta function and the relative partition function of the model described in Section \ref{s1}.

In order to obtain the relative zeta function for the pair of operators $(H_\alpha,H_0)$ described in Section \ref{s1}. It is clear, by the result in Prop. \ref{p.rel-tr}, that the conditions (B.1), (B.2) and (B.3) of Lemma \ref{l2.1}, necessary to define the relative zeta function are satisfied. Also, by the same proposition, the minimum value for the index $j$ is $j=2$, corresponding to $\alpha_2=-1$, then the first terms in the expansion of the relative spectral measure, according to  Lemma \ref{expmean}, are: first, the term corresponding to $\alpha_2=-1$, that gives a only a term in $\frac{1}{v}$, since $K_2=0$; second,  the terms corresponding to $\alpha_3=-\frac{3}{2}$, that gives a term in $\frac{1}{v^2}$ and a term in $\frac{1}{v^2}\log v^2$, since $K_{3}=1$. Applying the formula in Lemma \ref{expmean}, the coefficients are:
\begin{align*}
e_{2,0,0}&=0;\\
e_{3,0,0}&=\frac{4\pi\alpha+(2-C)\gamma+\gamma\log\frac{\gamma}{2}}{\pi},&e_{3,1,0}&=0,&e_{3,1,1}&=-\frac{\gamma}{2\pi}.
\end{align*}

Whence, we have the following expansion of the relative spectral measure:
\begin{equation}\label{exprsm1}
e(v;H_\alpha, H_0)=O(v^k), ~~ k>0
\end{equation}
for $v\to 0^+$,
\begin{equation}
\label{exprsm2}
e(v;H_\alpha, H_0)=-\frac{\gamma}{\pi}\frac{1}{v^2}\log v+\frac{4\pi\alpha+(2-C)\gamma+\gamma\log\frac{\gamma}{2}}{\pi} \frac{1}{v^2}+O(v^{-3}\log v)
\end{equation}
for $v\to +\infty$,  and $e_{3,0}=\frac{4\pi\alpha+(2-C)\gamma+\gamma\log\frac{\gamma}{2}}{\pi}$, $e_{3,1}=-\frac{\gamma}{\pi}$. All the coefficients $e_{j,h}$ with smaller indices vanish.

We are now in the position of analyzing the relative zeta function $\zeta(s;H_\alpha,H_0)$. In fact, what we are interested in is the expansion near $s=-\frac{1}{2}$.

\begin{proposition}\label{p:4.1} The relative zeta function $\zeta(s;H_\alpha,H_0)$ has an analytic expansion to a meromorphic function analytic in the strip $0\leq \Re(s)\leq 1$, up to a double pole at  $s=-\frac{1}{2}$. Near $s=-\frac{1}{2}$, the following expansion holds:
\begin{align*}
\zeta(s;H_\alpha,H_0)=&\frac{e_{3,1}/4}{\left(s+\frac{1}{2}\right)^2}+\frac{e_{3,0}/2}{s+\frac{1}{2}}+\int_0^1 v e(v;H_\al,H_0) dv \\
&+\int_1^\infty v\left(e(v;H_\al,H_0)-\frac{e_{2,1}}{v^2}\log v-\frac{e_{2,0}}{v^2}\right) dv +O\left(s+\frac12\right),
\end{align*}
where
\[
e_{3,1}=-\frac{\gamma}{\pi},\hspace{30pt} e_{3,0}=\frac{8\pi\alpha-2 C\ga+4\ga+\ga\log\frac{\ga^2}{4}}{2\pi}.
\]
\end{proposition}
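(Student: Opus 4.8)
The plan is to specialize the general machinery of Section~\ref{s2} to the pair $(H_\al,H_0)$, whose relative trace asymptotics were established in Proposition~\ref{p.rel-tr}. Concretely, I would invoke Lemma~\ref{rzf}, which gives the meromorphic continuation of $\zeta(s;A,A_0)$ near $s=-\frac12$ in terms of the coefficients $c_j$, $e_{j,h}$ of the relative spectral measure $e(v;A,A_0)$. The first task is therefore to read off from \eqref{exprsm1}--\eqref{exprsm2} which indices $j$ contribute. Since the small-$\la$ expansion \eqref{smalllambda} of $r(\la;H_\al,H_0)$ starts at $(-\la)^0$, all $\beta_j\geq 0$, so $\beta_j+1-s$ never vanishes for $s$ near $-\frac12$ and the entire first sum in \eqref{2.5a} is regular at $s=-\frac12$; in fact $J_0=0$ (the smallest integer with $\beta_{J_0}>-\frac32$), so only the $j=0$ term with $\beta_0=0$, $c_0 = -\frac{2b_0\sin 0}{\pi}=0$ appears, giving no small-$v$ singular contribution. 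For the large-$\la$ side, \eqref{largelambda} starts at $j=2$ with $\alpha_2=-1$ and $\alpha_3=-\frac32$; only $\alpha_3=-\frac32$ satisfies $\alpha_j+1-s\big|_{s=-1/2}=0$, and $J_\infty$ (the largest integer with $\alpha_{J_\infty}<-\frac32$) excludes $j=2,3$, so in \eqref{2.5a} the singular part collapses to the $j=3$ block: $\sum_{h=0}^{H_3}\frac{(-1)^{h+1}e_{3,h}}{2^{h+1}(\alpha_3+1-s)^{h+1}}$ with $H_3=1$.

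Next I would substitute $\alpha_3+1-s = -(s+\frac12)$, which turns the $h=1$ term into $\frac{e_{3,1}/4}{(s+\frac12)^2}$ and the $h=0$ term into $\frac{-e_{3,0}/2}{-(s+\frac12)} = \frac{e_{3,0}/2}{s+\frac12}$, matching the claimed principal part. The two remaining integrals in \eqref{2.5a} become, at $s=-\frac12$, $\int_0^1 v\,e(v;H_\al,H_0)\,dv$ (the subtracted term $c_0 v$ being zero) and $\int_1^\infty v\,(e(v;H_\al,H_0)-e_{3,1}v^{-2}\log v - e_{3,0}v^{-2})\,dv$ — here I would note that the paper writes $e_{2,1},e_{2,0}$ in the statement but by the displayed identifications these equal $e_{3,1}=-\gamma/\pi$ and $e_{3,0}$, since the only surviving large-$v$ coefficients are those indexed by $j=3$ (the $j=2$ coefficients vanish). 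The rest of \eqref{2.5a} is $O(s+\frac12)$, which one checks by differentiating the explicitly-computed elementary integral $\int_0^1 v^{-2s}(c_0 v)dv$ (which is identically zero here) and expanding $v^{-2s} = v\cdot v^{-2(s+1/2)} = v(1 - 2(s+\tfrac12)\log v + \cdots)$ inside the convergent integrals. Finally, for the explicit constants I would quote $e_{3,1}=-\gamma/\pi$ directly, and rewrite $e_{3,0}=\frac{4\pi\alpha+(2-C)\gamma+\gamma\log(\gamma/2)}{\pi}$ in the equivalent form $\frac{8\pi\alpha - 2C\gamma + 4\gamma + \gamma\log(\gamma^2/4)}{2\pi}$ by clearing the denominator by $2$ and using $2\log(\gamma/2)=\log(\gamma^2/4)$.

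As for analyticity in the strip $0\le\Re(s)\le1$: this follows from Lemma~\ref{rzf} together with the fact that the only pole of $\zeta(s;A,A_0)$ in a neighborhood of $[0,1]$ coming from \eqref{2.5a} would be at $s=\beta_j+1\ge 1$ from the first sum, but $\beta_0=0$ gives $c_0=0$, and the next candidate $\beta_1\ge 1$ gives a pole only at $s=\beta_1+1\ge 2$; likewise $s=\alpha_j+1\le 0$ for $j\ge2$ lies at or left of the origin. Hence the continuation is analytic on $0\le\Re(s)\le1$ except for the double pole at $s=-\frac12$, which is outside the strip but governs (via Corollary~\ref{c:2.6}) the behavior at $s=0$ of the companion operator $L$. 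The step I expect to be the genuine bookkeeping obstacle — though not a deep one — is the careful matching of indices: making sure that the vanishing of $c_0$ (because $\sin\pi\beta_0 = \sin 0 = 0$) and of $e_{2,0,0}$ and $e_{3,1,0}$ is correctly propagated, so that no spurious singular terms at $s=-\frac12$ or poles in the strip survive, and that the coefficients $e_{3,0},e_{3,1}$ are assembled from $a_{3,0},a_{3,1}$ via the formula $e_{j,k,h}=-a_{j,k}(\pi i)^{k-h-1}\binom{k}{h}(\e^{i\alpha_j\pi}-(-1)^{k-h}\e^{-i\alpha_j\pi})$ with $\alpha_3=-\frac32$, so that $\e^{\pm i\alpha_3\pi}=\pm i$, giving the stated real values.
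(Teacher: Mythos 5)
Your proposal is correct and follows essentially the same route as the paper: specialize Lemma \ref{rzf} to the pair $(H_\al,H_0)$ using the spectral-measure coefficients obtained from Proposition \ref{p.rel-tr} via Lemma \ref{expmean}, identify $J_0=0$ and $J_\infty=4$, observe that $c_0$ and the $j=2$ coefficients vanish so only the $j=3$ block survives, and read off the double and simple pole at $s=-\frac12$. Your extra bookkeeping (the $e_{2,k}\mapsto e_{3,k}$ index correction in the displayed integrand and the algebraic rewriting of $e_{3,0}$) matches what the paper's own terse proof implicitly relies on.
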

\begin{proof} By the expansions in equations \eqref{exprsm1} and \eqref{exprsm2} for the relative spectral measure, we see that the indices $J_0$ and $J_\infty$ defined in Lemma \ref{rzf} are respectively: $J_0=0$ and $J_\infty=4$. Hence, by the same lemma, there are no poles arising from the expansion of the spectral measure for small $v$, and since the minimum value for the index $j$ of $\alpha_j$ is $j=2$, there are three terms arising from the expansion for large $v$. The first term is with $j=2$, and vanishes since $e_{2,0}=0$. The other two terms are with $j=3$, and $k=0$ and $k=H_3=1$. Applying the formula in Lemma \ref{rzf} we compute these terms.
\end{proof}

\begin{corollary} The relative zeta function of the pair of operators $(L=-\p^2_u+H_\al, L_0=-\p_u^2+H_0)$ on $S^1_\frac{\beta}{2\pi}\times \R^3$ has a simple pole at $s=0$ with residua:
\begin{align*}
&\Ru_{s=0}\zeta(s;L,L_0)=\frac{\gamma}{4\pi}\beta, \\
&\Rz_{s=0}\zeta(s;L,L_0)=
-\frac{8\pi\alpha-2 C\ga+4\ga+\ga\log\frac{\ga^2}{4}}{4\pi}\beta+\frac{(1-\log 2)\gamma}{2\pi}\beta, \\
&\Rz_{s=0}\zeta'(s;L,L_0)\\
=&-\left(\int_0^1 v e(v;H_\al,H_0) dv+\int_1^\infty v\left(e(v;H_\al,H_0)-\frac{e_{3,1}}{v^2}\log v-\frac{e_{3,0}}{v^2}\right) dv\right)\beta\\
&- \frac{(1-\log 2)(8\pi\alpha-2 C\ga+4\ga+\ga\log\frac{\ga^2}{4})}{2\pi}\beta
+\left(2+\frac{\pi^2}{6}+2(1-\log 2)^2\right)\frac{\gamma}{4\pi}\beta\\
&-2\int_0^\infty \log\left(1-\e^{-\beta v}\right) e(v;H_\al,H_0) dv.
\end{align*}

\end{corollary}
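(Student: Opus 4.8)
The plan is to obtain all three identities by specializing Corollary \ref{c:2.6} (equivalently Proposition \ref{p1}) to the pair $(A,A_0)=(H_\al,H_0)$ and inserting the concrete data of Propositions \ref{p.rel-tr} and \ref{p:4.1}. As already noted at the start of this section, $(H_\al,H_0)$ satisfies (B.1)--(B.3) of Lemma \ref{l2.1} by virtue of the expansions \eqref{smalllambda}--\eqref{largelambda}, so Corollary \ref{c:2.6} applies with $W=\R^3$, $L=-\p^2_u+H_\al$, $L_0=-\p^2_u+H_0$ on $S^1_{\beta/2\pi}\times\R^3$ and $r=\beta/(2\pi)$.

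First I would rewrite the data of Proposition \ref{p.rel-tr} in the notation of Lemma \ref{l2.1}. From \eqref{largelambda} one has $\al_j=-j/2$ for $j\ge 2$, with $K_2=0$ and $K_3=1$; hence the distinguished index $a$ with $\al_a=-\tfrac32$ is $a=3$. From \eqref{smalllambda} one has $\beta_j=j$ for $j\ge 0$, so no index $b$ with $\beta_b=-\tfrac32$ occurs, and moreover $c_j=-\tfrac{2b_j\sin\pi\beta_j}{\pi}=0$ for every $j$, consistently with \eqref{exprsm1}; in particular $c_b=0$ in Corollary \ref{c:2.6}. As in Proposition \ref{p:4.1} this gives $J_0=0$ and $J_\infty=4$, and the only non-vanishing relative spectral measure coefficients entering at $s=-\tfrac12$ are $e_{a,1}=e_{3,1}=-\ga/\pi$ and $e_{a,0}=e_{3,0}=\frac{8\pi\al-2C\ga+4\ga+\ga\log(\ga^2/4)}{2\pi}$ (since $e_{2,0}=e_{2,1}=0$, the $\al_2=-1$ term drops out of every formula). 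Reading off the Laurent coefficients from Proposition \ref{p:4.1} one then has
\[
\Rd_{s=-\frac12}\zeta(s;H_\al,H_0)=\frac{e_{3,1}}{4},\qquad
\Ru_{s=-\frac12}\zeta(s;H_\al,H_0)=\frac{e_{3,0}}{2},
\]
and $\Rz_{s=-\frac12}\zeta(s;H_\al,H_0)$ equal to $\int_0^1 v\, e(v;H_\al,H_0)\,dv+\int_1^\infty v\big(e(v;H_\al,H_0)-e_{3,1}v^{-2}\log v-e_{3,0}v^{-2}\big)\,dv$.

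Finally I would substitute these three quantities into the formulas of Corollary \ref{c:2.6} for $\Ru_{s=0}\zeta(s;L,L_0)$, $\Rz_{s=0}\zeta(s;L,L_0)$ and $\Rz_{s=0}\zeta'(s;L,L_0)$ (with $c_b=0$ and the $j\ne a,b$ sums empty). Inserting $e_{3,1}=-\ga/\pi$ converts $-\tfrac14 e_{3,1}\beta$ into $\tfrac{\ga}{4\pi}\beta$ and $-\big(\tfrac12+\tfrac{\pi^2}{24}+\tfrac{(1-\log2)^2}{2}\big)e_{3,1}\beta$ into $\big(2+\tfrac{\pi^2}{6}+2(1-\log2)^2\big)\tfrac{\ga}{4\pi}\beta$; inserting the explicit value of $e_{3,0}$ produces the terms involving $8\pi\al-2C\ga+4\ga+\ga\log(\ga^2/4)$, including the combination $-\beta e_{3,0}/2-\beta(1-\log2)e_{3,1}/2$ in $\Rz_{s=0}\zeta(s;L,L_0)$; and $-2\log\eta(\beta;H_\al,H_0)$ is rewritten, via the definition of the relative Dedekind eta function in Proposition \ref{p1}, as $-2\int_0^\infty\log(1-\e^{-\beta v})\,e(v;H_\al,H_0)\,dv$. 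Collecting everything yields the three stated identities. The computation is essentially mechanical; the only points requiring care — the mild obstacle here — are the bookkeeping showing that $c_b=0$, that $e_{2,0}=e_{2,1}=0$ so the $\al_2=-1$ contribution vanishes identically, and that with $J_0=0$, $J_\infty=4$ the finite $j$-sums of Corollary \ref{c:2.6} collapse, leaving exactly the two integrals of Proposition \ref{p:4.1}; one should also verify the elementary identities $\big(2+\tfrac{\pi^2}{6}+2(1-\log2)^2\big)/4=\tfrac12+\tfrac{\pi^2}{24}+\tfrac{(1-\log2)^2}{2}$ and $\tfrac12\log(\ga^2/4)=\log(\ga/2)$ used to reconcile the two forms of $e_{3,0}$.
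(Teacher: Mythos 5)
Your proposal is correct and is essentially the paper's own proof, which simply invokes Proposition \ref{p:4.1} and Corollary \ref{c:2.6}; your bookkeeping ($a=3$, no index $b$ so $c_b=0$, $e_{2,0}=e_{2,1}=0$, $e_{3,1}=-\ga/\pi$, $e_{3,0}=\frac{8\pi\al-2C\ga+4\ga+\ga\log(\ga^2/4)}{2\pi}$, empty $j$-sums) supplies exactly the details the paper leaves implicit. Note only that the combination you use for $\Rz_{s=0}\zeta(s;L,L_0)$, namely $-\beta e_{3,0}/2-\beta(1-\log 2)e_{3,1}/2$, is the one that follows from Proposition \ref{p1} rather than from the expression $-\frac12\left(c_b-e_{a,0}-(1-\log 2)e_{a,1}\right)\beta$ as literally printed in Corollary \ref{c:2.6} (which carries the opposite overall sign); since the stated corollary agrees with the Proposition \ref{p1} version, your signs are the correct ones.
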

\begin{proof}
This is a simple consequence of Prop. \ref{p:4.1} and Cor. \ref{c:2.6}.
\end{proof}

Using the formula in equation (\ref{partpart}), we obtain the following result for the relative partition function, where $\ell$ is some renormalization constant,
\begin{align*}
& \log Z_{\rm R} \\
=&-\frac{1}{2}\left(\int_0^1 v e(v;H_\al,H_0) dv+\int_1^\infty v\left(e(v;H_\al,H_0)-\frac{e_{3,1}}{v^2}\log v-\frac{e_{3,0}}{v^2}\right) dv\right)\beta\\
&- \frac{(1-\log 2)(8\pi\alpha-2 C\ga+4\ga+\ga\log\frac{\ga^2}{4})}{4\pi}\beta
+\left(2+\frac{\pi^2}{6}+2(1-\log 2)^2\right)\frac{\gamma}{8\pi}\beta\\
&-\int_0^\infty \log\left(1-\e^{-\beta v}\right) e(v;H_\al,H_0) dv \\
&+\left(4\pi\alpha- C\ga+2\ga+\ga\log\frac{\ga}{2}-1+\gamma\log 2\right)\frac{\beta}{2\pi}
\log \ell.
\end{align*}

\vspace{30pt}
{\bf Acknowledgments.}  The authors are very pleased to dedicate this work to Pavel Exner, on the occasion of his 70th birthday. He has always been for us a source of inspiration, and we are very grateful to him for his support. We thank the referee for the very useful suggestions and for pointing out additional references. The authors acknowledge the hospitality and financial support of the CIRM, University of Trento, Italy.  C.C. also gratefully acknowledges the hospitality of the  Mathematical Institute, Tohoku University, Sendai, Japan, of  the Hausdorff Center for Mathematics, University of Bonn, Germany, and the support of the FIR 2013 project, code RBFR13WAET.

\end{document}